% This is samplepaper.tex, a sample chapter demonstrating the
% LLNCS macro package for Springer Computer Science proceedings;
% Version 2.20 of 2017/10/04
%
\documentclass[runningheads]{llncs}
\usepackage{graphicx}
% Used for displaying a sample figure. If possible, figure files should
% be included in EPS format.
%
% If you use the hyperref package, please uncomment the following line
% to display URLs in blue roman font according to Springer's eBook style:
% \renewcommand\UrlFont{\color{blue}\rmfamily}
\hyphenation{op-tical net-works semi-conduc-tor}
\usepackage{cite}
\usepackage{amsmath,amssymb,amsfonts}
\usepackage{mathtools}
\usepackage{algorithmic}
\usepackage{graphicx}
\usepackage{textcomp}
\usepackage{tikz}
\usepackage{caption}
\usepackage{cuted}
\usepackage[utf8]{inputenc}
\usepackage{pgfplots} 
\usepackage{pgfgantt}
\usepackage{pdflscape}
\usepackage{appendix}
\usepackage{pst-plot}
\usetikzlibrary{spy}
\usetikzlibrary{positioning,calc}
\usetikzlibrary{decorations.pathmorphing,calc,shapes,shapes.geometric,patterns}
\usetikzlibrary{shapes.multipart}
\usepackage{xfrac}
\usepackage{colortbl}
\usepackage{cancel}
\usepackage{comment}
\usetikzlibrary{arrows,positioning,calc,intersections}
\usetikzlibrary{datavisualization.formats.functions}
\def\BibTeX{{\rm B\kern-.05em{\sc i\kern-.025em b}\kern-.08em
    T\kern-.1667em\lower.7ex\hbox{E}\kern-.125emX}}

\usepackage{pgfplots}
\usepgfplotslibrary{fillbetween}
\usetikzlibrary{arrows, decorations.markings}
\usepackage{algorithm}
\usepackage{algorithmic}

\newcommand{\setx}{\ensuremath{\mathcal{X}}}
\newcommand{\sety}{\ensuremath{\mathcal{Y}}}
\newcommand{\setz}{\ensuremath{\mathcal{Z}}}
\newcommand{\setd}{\ensuremath{\mathcal{D}}}

\newcommand{\sete}{\ensuremath{\mathcal{E}}}
\newcommand{\bs}[1]{\boldsymbol{#1}}
\newcommand{\Wg}{W_{\text{g}}}
\newcommand{\Vg}{V_{\text{g}'}}

\newcommand{\up}{u^\prime}
\newcommand{\upp}{u^{\prime\prime}}
\newcommand{\Mp}{M^\prime}
\newcommand{\Mpp}{M^{\prime\prime}}
\newcommand{\dc}{\mathcal{D}}
\newcommand{\dcp}{\mathcal{D}^\prime}
\newcommand{\dcpp}{\mathcal{D}^{\prime\prime}}
\newcommand{\C}{\mathcal{C}}
\newcommand{\Cp}{\mathcal{C}^\prime}
\newcommand{\Cpp}{\mathcal{C}^{\prime\prime}}
\newcommand{\Cn}{\mathbb{C}^{N_T}}
\newcommand{\Cr}{\mathbb{C}^{N_R}}

% Macros:
            % \eg{eq:golomb}  --> Equation (2.15)
                   % \eg{eq:golomb}  --> (2.15)
               % \fig{fig:golomb}--> Figure 2.15
                % \tab{tab:lala}  --> Table 2.15
    
%\newtheorem{theorem}{Theorem}
%\newtheorem{lemma}{Lemma}
% Abbreviations

\newcommand{\h}{\mathsf{H}}

\newcommand{\mbf}{\mathbf}
\newcommand{\mc}{\mathcal}
\newcommand{\mbb}{\mathbb}
\newcommand{\Zp}{Z^\prime}

\newcommand{\circlearrow}{}% just in case
\DeclareRobustCommand{\circlearrow}{%
  \mathrel{\vphantom{\rightarrow}\mathpalette\circle@arrow\relax}%
}
\newcommand{\circle@arrow}[2]{%
  \m@th
  \ooalign{%
    \hidewidth$#1\circ\mkern1mu$\hidewidth\cr
    $#1-$\cr}%
}

\begin{document}
\title{Optimal Signal Processing for Common Randomness Generation over MIMO Gaussian Channels with Applications in Identification}
\titlerunning{Optimal Signal Processing for CR Generation over Gaussian Channels}
% If the paper title is too long for the running head, you can set
% an abbreviated paper title here
%
\author{Rami Ezzine \inst{1,3}\orcidID{0000-0002-3432-4447} \and Wafa Labidi\inst{1,2,3}\orcidID{0000-0001-5704-1725} \and
Christian Deppe\inst{2,3}\orcidID{0000-0002-2265-4887} \and
Holger Boche\inst{1,3,4,5}\orcidID{0000-0002-8375-8946}}
\authorrunning{R. Ezzine, W. Labidi, C. Deppe, H. Boche}
% First names are abbreviated in the running head.
% If there are more than two authors, 'et al.' is used.
%
\institute{Technical University of Munich, TUM School of Computation, Information and Technology, Munich, Germany \and
Technical University of Braunschweig, Institute for Communications Technology, Braunschweig, Germany \and 6G-life, 6G research hub, Germany
\and  {\color{black}{Munich Center for Quantum Science and Technology, Munich, Germany}} \and {\color{black}{Munich Quantum Valley, Munich, Germany}} \\
\email{rami.ezzine@tum.de, wafa.labidi@tum.de, christian.deppe@tu-braunschweig.de, boche@tum.de}}
\maketitle              % typeset the header of the contribution
\centerline{\bf In memory of Ning Cai}
\begin{abstract}
Common randomness (CR), as a resource, is not commonly exploited in existing practical communication systems. In the CR generation framework, both the sender and receiver aim to generate a common random variable observable to both, ideally with low error probability. The availability of this CR allows us to implement correlated random protocols that can lead to faster and more efficient algorithms. Previous work focused on CR generation over perfect channels with limited capacity. In our work, we consider the problem of CR generation from independent and identically distributed (i.i.d.) samples of a correlated finite source with one-way communication over a Gaussian channel.  We first derive the CR capacity for single-input single-output (SISO) Gaussian channels. This result is then used for the derivation of the CR capacity in the multiple-input multiple-output (MIMO) case. CR plays a key role in the identification scheme since it may allow a significant increase in the identification capacity of channels. In the identification framework, the decoder is interested in knowing \emph{whether} a specific message of special interest to him has been sent or not, rather than knowing \emph{what} the received message is. In many new applications, such as several machine-to-machine and human-to-machine systems and the tactile internet, this post-Shannon scheme is more efficient than classical transmission. In our work, we also consider a CR-assisted secure identification scheme and develop a lower bound on the corresponding secure identification capacity.

\keywords{Common randomness \and Gaussian channels \and secure identification.}
\end{abstract}

\section{Introduction}

In the common randomness (CR) generation framework, the communicating parties, often referred to as terminals, aim to generate a common random variable observable to both, ideally with low error probability \cite{part2}. The availability of this shared randomness enables the implementation of correlated random protocols, which can result in faster and more efficient algorithms \cite{corrsurvey}\cite{Naor2020}.

CR is considered a highly promising resource for future communication systems due to its essential role in various communication tasks.
For instance, CR plays a key role in the identification scheme, an approach in communications developed by Ahlswede and Dueck \cite{Idchannels}.  Interestingly, CR can significantly increase the identification capacity of channels. As a result, an enormous performance gain can be achieved by taking advantage of this resource. In the identification framework, the encoder sends an identification message over the channel. In contrast to transmission \cite{shannon}, the decoder is now interested in knowing whether a specific message of special interest to him was sent or not, rather than knowing what the received message is.
In many new applications such as several machine-to-machine and human-to-machine systems \cite{application}, industry 4.0 \cite{industrie40}, 6G communication systems\cite{6Gcomm}\cite{6Gpostshannon} and digital watermarking\cite{Moulin,watermarkingahlswede,watermarking}, it appears that the identification scheme is more efficient than the classical transmission scheme. 

CR is perhaps more evident in cryptography. In fact, it is used in the secret key generation problem \cite{part1}. Note that the key generation problem is an example of common randomness generation where secure communication between sender and receiver is ensured. It is worth mentioning that an interesting scenario in this context is the use of WiFi to exploit common randomness as a key, as introduced in \cite{Wifi}. In our work, however, we will not impose any secrecy constraints.

Additionally, CR plays an important role in modular coding schemes for secure communication. As discussed in \cite{semanticwiretap}, modular schemes for semantic security have been designed to integrate with arbitrary error-correcting codes, thereby establishing semantic security. Often, in seeded modular coding scenarios, legitimate parties possess CR as an additional resource, which can be used as a seed \cite{semanticsecurity}.

Furthermore, it was demonstrated in \cite{6Gandtrustworthiness} that exploiting CR as a resource facilitates state estimation with error-free reconstruction of the state distribution in joint sensing and communication applications. Moreover, it was established that the presence of this resource is crucial for the perfect reconstruction of the state distribution. This characteristic of CR is highly intriguing for joint sensing and communication applications in 6G \cite{6Gcomm}.\color{black}

The Post-Shannon resource of CR can also be leveraged to achieve inherent resilience for the tactile internet and quantum communication systems. Specifically, when legitimate parties have access to a common random source as an additional coordination resource, communication becomes resilient against denial-of-service (DOS) attacks by jammers. Remarkably, only a few bits of CR are needed to counteract the jamming attack \cite{jammerref}. \color{black}Incorporating resilience by design is crucial for ensuring trustworthiness in 6G \cite{trustworthiness}. \color{black} 
In \cite{6Gperspective}, CR was named as an important additional resource for future 6G systems due to the aforementioned potential for a wide range of applications. The first network operators are already starting to set up a research network infrastructure for the generation and distribution of CR.

Previous work in \cite{part2} focused on the problem of CR generation from finite sources with unidirectional communication over perfect rate-limited channels. In our work, we consider the case when the terminals communicate over single-input single-output (SISO) as well as multiple-input multiple-output (MIMO) Gaussian channels. Gaussian channels are well-known for their practical relevance in many communication situations, e.g., satellite and deep space communication links \cite{mobile}, wired and wireless communications, etc. We characterize the CR capacity for our specified model. The latter is defined as the maximum rate of CR one can achieve using the resources available in the model.

In our work, we also address the problem of secure identification over Gaussian wiretap channels (GWCs) with common randomness (CR) being available as a resource. Secure identification has been extensively studied for discrete alphabets \cite{Boche2019SecureIU,IDforDataStorage,application} over recent decades due to its important potential use in many future scenarios. Indeed, for discrete channels, it was proved in \cite{Boche2019SecureIU} that secure identification is robust under channel uncertainty and against jamming attacks. It has been demonstrated that, in contrast to secure transmission, the identification capacity of the discrete wiretap channel coincides with the capacity of the main channel. This holds true only if the secrecy capacity elaborated in \cite{sectransmission} is strictly positive. Recently, the results were extended to the Gaussian case in \cite{wafapaper}. However, as far as we know, there has been limited research on the secure CR-assisted identification capacity for GWCs. The wiretap channel is a basic model considered by Wyner \cite{Wyner} in information-theoretic security. The wiretapper, in contrast to the discrete case, is now not limited anymore and has an infinite alphabet. Moreover, we assume the wiretapper has access to the correlated source signals. This is advantageous for him because he has no limitations on the hardware resolution.
In our coding scheme, the sender wants to send a secure identification message to the legitimate receiver so that the receiver is able to identify his message. Both the sender and the receiver share an extra resource of randomness. Meanwhile, the unauthorized party attempts to identify an unknown message.

The main contributions of this work consist of deriving a single-letter formula for the CR capacity for the standard two-source model with one-way communication over SISO and MIMO Gaussian channels, as well as using the obtained results on CR capacity to provide an achievable rate for correlation-assisted secure identification over Gaussian wiretap channels.

The paper is organized as follows: In Section \ref{sec2}, we provide the definition of an achievable CR rate for a model including two correlated sources with one-way communication over a SISO and MIMO Gaussian channel, respectively. Additionally, we introduce the main definitions of CR-assisted identification and secure identification. In Section \ref{sec3}, we propose a single-letter characterization of the CR capacity for the  SISO Gaussian case. We use this result to completely solve the Gaussian MIMO case by establishing the corresponding CR capacity. In Section \ref{sec4}, we derive a lower bound on the secure CR-assisted identification capacity of GWCs. Section \ref{conclusion} encompasses concluding remarks and proposes potential future research directions in this field. Auxiliary proofs are collected in the appendix.

\section{Preliminaries}
\label{sec2}
In this section, we introduce the different scenarios and channel models investigated for CR generation. Additionally, we provide some basic definitions regarding CR-assisted identification over Gaussian channels and establish the notation that will be used.
\subsection{Notation}
$\mathbb{C}$ denotes the set of complex numbers; $H(\cdot)$ and $I(\cdot ;\cdot)$ are the entropy and mutual information, respectively; $h(\cdot)$ denotes the differential entropy; all information quantities are taken to base 2; $\lVert\boldsymbol{a}\rVert_{2}$ denotes the L$_2$ norm of a vector $\boldsymbol{a}$; $\bs{A}^\h$ stands for the Hermitian transpose of the matrix $\bs{A}$, $|\mathcal{A}|$ stands for the cardinality of the set $\mathcal{A}$, $\log$ is taken to base 2 and $\ln$ stands for the natural logarithm. $\mathcal{T}_{P_{X}}^{n}$ denotes the set of typical sequences of length $n$ and of type $P_{X}$ and $\mathcal{T}_{{P}_{Y|X}}^{n}(x^n)$ denotes the set of sequences $y^n$ of length $n$ having conditional type $P_{Y|X}$ given the sequence $x^n$ of length $n$.
%and $\mathcal{N}(\mu,\sigma^2)$ refers to the Gaussian distribution with mean $\mu$ and variance $\sigma^2$.
\subsection{Common Randomness Generation: Two Correlated Sources with One-Way Communication over a Gaussian Channel}
A discrete memoryless multiple source DMMS $P_{XY}\in \mathcal{P} \left(X \times Y\right)$ with two components, with  generic variables $X$ and $Y$ on alphabets $\mathcal{X}$ and $\mathcal{Y}$, correspondingly, is given. The $n$-lengths source outputs are observable at Terminals $A$ and $B$, respectively.

Terminal $A$ generates a random variable $K=\Phi(X^n)$ with alphabet $\mathcal{K}$ and a random sequence $T^n=\Lambda(X^n).$ 
 $T^n$ is sent over a Gaussian channel with input constraint. Let $Z^n$ be the channel output. Terminal $B$ generates a random variable $L$ with the same alphabet $\mathcal{K}$  as a function of $Y^n$
and $Z^n$, i.e., $L=\Psi(Y^n,Z^n)$. Here, $\Phi, \Lambda, 
 \Psi$ refer to functions/signal processing algorithms. 
 
A pair of random variables $(K,L)$ is permissible if $K$ and $L$ are functions of the resources available at Terminal $A$ and Terminal $B$, respectively i.e.,
\begin{equation}
    K=\Phi(X^{n}), \ \     L=\Psi(Y^{n},Z^{n}).
    \label{KLSISOcorrelated}
\end{equation}
\begin{remark}
In the case of a communication over a MIMO channel with channel output $\bs{Z}^n$, a pair of random variables $(K,L)$ is permissible if $K$ and $L$ are functions of the resources available at Terminal $A$ and Terminal $B$, respectively, i.e.,
\begin{equation}
    K=\Phi(X^{n}), \ \     L=\Psi(Y^{n},\bs{Z}^{n}).
    \label{KLMIMOcorrelated}
\end{equation}
\end{remark}
\begin{definition} 
A number $H$ is called an achievable CR rate if for sufficiently large $n$ and every $\alpha >0$ $\delta>0$, there exists a permissible pair of random variables $(K,L)$ such that 
\begin{equation}
    \mbb P[K\neq L]\leq \alpha
    \label{errorcorrelated}
\end{equation}
\begin{equation}
    \frac{1}{n}H(K)> H-\delta.
     \label{ratecorrelated}
\end{equation}
\end{definition}
\begin{definition} 
The CR capacity is the maximum achievable CR rate \cite{part2}.
\end{definition}
\subsection{Gaussian Channel Model}
\label{SISOchannel}
\begin{figure}[!htb]
\centering
\tikzstyle{block} = [draw, rectangle, rounded corners,
minimum height=3em, minimum width=3cm]
\tikzstyle{blockchannel} = [draw, top color=white, bottom color=white!80!gray, rectangle, rounded corners,
minimum height=1cm, minimum width=.3cm]
\tikzstyle{input} = [coordinate]
\usetikzlibrary{arrows}
\begin{tikzpicture}[scale= 1,font=\footnotesize]
\node[blockchannel] (source) {$P_{XY}$};
\node[blockchannel, below=4cm of source](channel) { Gaussian channel};
\node[block, below left=2.2cm of source] (x) {Terminal $A$};
\node[block, below right=2cm of source] (y) {Terminal $B$};
%\node[block,right= .7cm of channel.390] (bob) {\small Decoder};
%\node[block,right=.7cm of channel.330] (eve) {\small Eavesdropper};
\node[above=1cm of x] (k) {$K=\Phi(X^n)$};
\node[above=1cm of y] (l) {$L=\Psi(Y^n,Z^n)$};

\draw[->,thick] (source) -- node[above] {$X^n$} (x);
\draw[->, thick] (source) -- node[above] {$Y^n$} (y);
\draw [->, thick] (x) |- node[below right] {$T^n=\Lambda(X^n)$} (channel);
\draw[<-, thick] (y) |- node[below left] {$Z^n$} (channel);
\draw[->] (x) -- (k);
\draw[->] (y) -- (l);

\end{tikzpicture}
\caption{ Standard Two-source model with one-way Communication over a SISO Gaussian channel}
\label{correlatedSISO}
\end{figure}
We consider first the SISO Gaussian channel, as depicted in Fig. \ref{correlatedSISO}.
Terminal $A$ encodes  $X^{n}$ into a sequence $T^{n}$ satisfying
\begin{equation}
    \mathbb{E}[\lvert T_{i} \rvert^{2}]\leq P.
    \label{energyconstraintSISOCorrelated}
\end{equation}
It follows from \eqref{energyconstraintSISOCorrelated} that each input sequence $t^n$ lies in the new constrained input set ${\mathcal{T}}_{n,P}$ defined as follows:
 \begin{multline}
\mathcal{T}_{n,P}=\{t^n \in \mathcal{T}^n \subset \mathbb{C}^n \ \text{realization of} \ T^n \\
\text{that is subject to} \ \mathbb{E}[\lvert T_{i}\rvert^{2}]\leq P \hspace{0.2cm} i=1,\ldots,n 
    \}.
    \label{inputConstrained}
\end{multline}
The sequence $T^{n}$ is sent over a Gaussian channel with an input constraint as in \eqref{energyconstraintSISOCorrelated}, and  $Z^{n}$ is defined as the  channel output, where it holds that
\begin{align}
    Z_{i}=T_{i}+ \xi_{i}  \hspace{0.3cm} i=1\ldots n,
\nonumber \end{align}
where $\xi_{i} \sim \mathcal{N}_{\mathbb{C}}(0,\sigma^2)$.
For simplicity, we drop the index $i$. The channel has capacity 
\begin{equation}
    C(P)=\underset{{\substack{P_{T}\\ \mathbb{E}[\lvert T \rvert^{2}]\leq P}}}{\max} I(T;Z).
    \label{transmissioncapacitySISOcorrelated}
\end{equation}
\subsection{MIMO Gaussian Channel Model}
\label{MIMOchannel}
We consider second the MIMO Gaussian channel, as depicted in Fig. \ref{correlatedMIMO}.
\begin{figure}[!t]
\centering
\tikzstyle{block} = [draw, rectangle, rounded corners,
minimum height=2em, minimum width=2cm]
\tikzstyle{blocksource} = [draw, top color=white, bottom color=white!80!gray, rectangle, rounded corners,
minimum height=1.1cm, minimum width=.31cm]
\tikzstyle{blockchannel} = [draw, top color=white, bottom color=white!80!gray, rectangle, rounded corners,
minimum height=1.5cm, minimum width=.35cm]
\tikzstyle{input} = [coordinate]
\tikzstyle{vectorarrow} = [thick, decoration={markings,mark=at position
   1 with {\arrow[semithick]{open triangle 60}}},
   double distance=1.4pt, shorten >= 5.5pt,
   preaction = {decorate},
   postaction = {draw,line width=1.4pt, white,shorten >= 4.5pt}]
   \tikzstyle{sum} = [draw, circle,inner sep=0pt, minimum size=2mm,  thick]
\usetikzlibrary{arrows}
\scalebox{.9}{
\begin{tikzpicture}[scale= 1,font=\footnotesize]
\node[blocksource] (source) {$P_{XY}$};
\node[blockchannel, below=5cm of source](channel) {\large$\mathbf{H}$};
\node[sum, right=.5cm of channel] (sum) {$+$};

\node[block, below left=3.15cm of source] (x) {Terminal $A$};
\node[block, below right=3.15cm of source] (y) {Terminal $B$};
%\node[block,right= .7cm of channel.390] (bob) {\small Decoder};
%\node[block,right=.7cm of channel.330] (eve) {\small Eavesdropper};
\node[above=.5cm of sum] (noise) {$\bs{\xi^n}$};
\node[above=1.5cm of x] (k) {$K=\Phi(X^n)$};
\node[above=1.5cm of y] (l) {$L=\Psi(Y^n,\bs{Z}^n)$};

\draw[->,thick] (source) -- node[above] {$X^n$} (x);
\draw[->, thick] (source) -- node[above] {$Y^n$} (y);
\draw [vectorarrow] (x) |- node[below right] {$\bs{T}^n=\Lambda(X^n)$} (channel);
\draw [vectorarrow] (channel) -- (sum);
\draw[vectorarrow] (noise) -- (sum);
\draw[vectorarrow] (sum) -| node[below left] {$\bs{Z}^n$} (y);
\draw[->] (x) -- (k);
\draw[->] (y) -- (l);

\end{tikzpicture}}
\caption{Standard Two-source model with one-way Communication over a MIMO Gaussian channel.}
\label{correlatedMIMO}
\end{figure}
Terminal $A$ encodes  $X^{n}$ into a sequence $\bs{T}^{n} \in \mathbb{C}^{N_{T}\times n},$ such that 
\begin{equation}
    \mathbb{E}[\lVert \bs{T}_{i}\rVert_{2}^{2}]\leq P,  \ \ i=1\dots n.
    \label{energyconstraintMIMOCorrelated}
\end{equation}
It follows that each input sequence $\bs{t}^n$ lies in the input set $\mathcal{T}_{N_T\times n,P},$ defined as follows:
%\begin{equation}
%\mathcal{T}_{N_T\times n,P}=\left\{\bs{t}^n \in \mathbb{C}^{N_T \times n}\ \colon \mathbb{E}[\lVert \bs{T}_{i}\rVert_{2}^{2}]\leq P \hspace{0.2cm} i=1,\ldots,n\right\}. 
%\end{equation}
\begin{multline}
\mathcal{T}_{N_T\times n,P}=\{\bs{t}^n \in \mathbb{C}^{N_T \times n}\ \text{realization of} \ \bs{T}^n\\
\text{such that} \ \mathbb{E}[\lVert \bs{T}_{i}\rVert_{2}^{2}]\leq P \hspace{0.2cm} i=1,\ldots,n 
    \}.
\end{multline}
We consider the following channel model with $N_T$ transmit antennas and  $N_R$ receive antennas:
	\begin{equation} \bs{Z}_i=\mathbf{H}\bs{T}_i+\bs{\xi}_i,\quad \forall i=1,\ldots,n,
	\label{Model:MIMOcorrelated}\end{equation}
where $n$, as previously mentioned, is the number of channel uses, as shown in Fig. \ref{correlatedMIMO}. For simplicity, we drop the index $i$.	The input vector $\boldsymbol{T} \in \Cn $ contains the $N_T$ scalar transmitted signals and fulfills the following power constraint:
\begin{equation}
     \mathbb{E}[\lVert \bs{T}\rVert_{2}^{2}]\leq P.
\nonumber \end{equation}
The output vector $\bs{Z} \in \Cr$
comprises the scalar received signals of the $N_R$ channel outputs. The channel matrix
	\begin{equation*} \mathbf{H}= \begin{pmatrix} h_{11} & \ldots& h_{1N_T} \\ 
	\vdots & \ddots & \vdots \\  h_{N_R1} & \ldots & h_{N_RN_T}
	\end{pmatrix}\in \mathbb{C}^{N_R \times N_T} \end{equation*}
	is a full-rank deterministic matrix.
	%including all effects from the input of the front-end at the transmitter to the output of the front-end at the receiver. 
	The entry $h_{ij}$ represents the channel gain from transmit antenna $j$ to receive antenna $i$. 
The vector $\bs{\xi} \in \Cr$ is the circularly symmetric Gaussian noise, $\bs{\xi} \sim \mathcal{N}_{\mathbb{C}}(\bs{0}_{N_R},\sigma^2 \mathbf{I}_{N_R})$.

The MIMO channel has the capacity 
\begin{equation}
   C(P,N_{T} \times N_{R})=\underset{{\substack{P_{\bs{T}}\\ \mathbb{E}[\lVert \bs{T}\rVert_{2}^{2}]\leq P}}}{\max} I(\bs{T};\bs{Z}).
    \label{transmissioncapacityMIMOcorrelated}
\end{equation}

\subsection{CR-assisted Identification over Gaussian Channels}
In 1989, Ahlswede and Dueck \cite{Idchannels} proposed  the identification scheme which is conceptually different from the classical transmission scheme of Shannon. In transmission, the encoder transmits a message over a channel $W,$ and at the receiver side, the decoder wants to estimate this message based on the channel observation. However, this is not the case for identification. Indeed, in the identification scheme, the encoder sends an identification message (also called an identity) $M \in \mathcal{N}$ over the channel and the decoder is not interested in \emph{what} the received message is, but he wants to check \emph{whether} a specific message $\hat{M} \in \mathcal{N}$ has been sent or not.
Naturally, the sender has no knowledge of this specific message, otherwise it would be a trivial problem. The identification problem can be regarded as solving many hypothesis testing problems occurring simultaneously. 

There are many interesting applications of the identification scheme, such as in industry 4.0, online sales, and the healthcare field \cite{application}. For instance, in product engineering, sensors are used to control the sequence of production. The sensor data is collected and processed by a central unit. Here, the receiver is interested in checking whether or not an error occurs in the sequence of production rather than determining the accurate sensor measurements.
For insight into the explicit construction of identification codes, we refer the reader to \cite{explicitconstruction}. Furthermore, a special identification code construction using tag codes with two concatenated Reed-Solomon codes is implemented in \cite{implementation}.
In our work, we are particularly interested in studying the problem of CR-assisted identification, in which the transmitter and the receiver have access to a correlated source
$P_{XY}\in \mathcal{P} \left(X \times Y\right)$ as visualized in Fig.  \ref{fig:IdwithCR}.
 \begin{figure*}[!t]
    \centering
    \tikzstyle{block} = [draw, top color=white, bottom color=white!80!gray, rectangle, rounded corners,
minimum height=2em, minimum width=3cm]
\tikzstyle{blockchannel} = [draw, top color=white, bottom color=white!80!gray, rectangle, rounded corners,
minimum height=2cm, minimum width=.1cm]
\tikzstyle{input} = [coordinate]
\usetikzlibrary{arrows}
\scalebox{.74}{\begin{tikzpicture}[auto, node distance=2cm,>=latex']
\node[] (m) {\small $M$};
\node[block,right=1cm of m] (enc) {\small Encoder};
\node[blockchannel, right=1.5cm of enc](channel) {\small Channel
$W^n$};
\node[block, above=.7cm of channel] (source) {$P_{XY}$};
\node[block,right= 1.5 cm of channel.360] (bob) {\small Decoder};
\node[right=1cm of bob] (what) {\small Is $\hat{M}$ sent? Yes,No?};

\draw[->] (m) -- (enc);
\draw[->] (enc) -- node[above]{$T^{n}$}  (channel);
\draw[->] (channel.360) -- node[above]{$Z^{n}$} (bob);
\draw[->] (bob) -- (what);
\draw[->] (source) -- node[above]{$X^n$} (enc);
\draw[->] (source) -- node[above]{$Y^n$} (bob);
\end{tikzpicture}}
    \caption{CR-assisted identification.}
    \label{fig:IdwithCR}
\end{figure*}
Unlike in \cite{Idchannels}, we do not assume the existence of local randomness.
In what follows, $x^n$ and $y^n$ are realizations of $X^n$ and $Y^n$, respectively.
\begin{definition} A CR-assisted  $(n,N,\lambda_1,\lambda_2)$ identification code  for the Gaussian channel $W$ is a family of pairs $\left\{\left(\bs{u}_i,\setd_i(y^n)\right),\quad i=1,\ldots,N \right\},$ with 
\begin{equation*} \bs{u}_i=\Phi\left(x^n\right) \in \mathcal{T}_{n,P} ,\quad \mathcal D_{i}(y^n) \subset {\mathcal{Z}} ^{n},~\forall ~i\in \{1,\ldots,N\}, \end{equation*}
such that for all $i,j \in \{1,\ldots,N\}$, $i \neq j$ and $\lambda_1+\lambda_2 <1,$ the errors of the first and second kind  satisfy
\begin{align}
&  W^n((\setd_i(y^n))^c|\bs{u}_i)  \leq\lambda_1, \label{errofirstkind} \\
&  W^n(\setd_i(y^n)|\bs{u}_j) \leq\lambda_2 \quad \forall i\neq j. \label{errorsecondkind}
\end{align} \end{definition}
%The rate $R > 0$ of such a CR-assisted $(n,;N)$-ID code is
%said to be achievable if for all $\tau > 0$ there exist an $n(\tau) \in \mathbb{N}$
%and a sequence of $(n,;N)$-ID codes  such that for all $n\geq n(\tau)$
%we have $\frac{1}{N} \log \log(N)> R$ and the errors of first and second
%kind \eqref{errofirstkind} and \eqref{errorsecondkind} are satisfied.

%Let $N(n,\lambda)$ be the maximal number such that an $(n,N,\lambda_1,\lambda_2)$ CR-assisted identification code for the channel $W$ exists with $\lambda_1, \lambda_2 \leq \lambda$ and $\lambda \in (0,\frac{1}{2})$ then the CR-assisted identification capacity for the Gaussian Channel $W$ is given by:
%\begin{equation}
%C_{ID}(P) \triangleq \lim_{n \to \infty} \frac{1}{n} \log \log N (n, \lambda)
%\end{equation}
\begin{definition}
    \label{defcapacity}
  $C_{ID}^{c}(P)$  the  CR-assisted  identification capacity of the channel $W$ is defined as follows:
  \begin{align}
      &C_{ID}^{c}(P)= \nonumber \\
      &\max\left\{ R\colon \forall \lambda>0,\ \exists n(\lambda) \text{ s.t. for } n \geq n(\lambda)  \ N(n,\lambda) \geq 2^{2^{nR}}         \right\}, 
  \nonumber \end{align}
 where $N(n,\lambda)$ is the maximal cardinality such that a $(n,N,\lambda_1,\lambda_2)$ CR-assisted identification code for the channel $W$ exists. 
\end{definition}
 \label{RandomizedMIMO}
\begin{definition} 
For the MIMO channel described in \eqref{Model:MIMOcorrelated}, a CR-assisted $(n,N,\lambda_1,\lambda_2)$ identification-code is a family of pairs $\{\left(\mathbf{u}_i,\setd_i(y^n)\right),\ i=1,\dots,N\},$ such that for some $\lambda_1+\lambda_2 < 1$ and for all $i \in \{1,\ldots,N\}$, we have
\begin{align}
& \mathbf{u}_i \in \mathcal{T}_{N_T\times n,P}, \nonumber \\
& \setd_i(y^n) \subset \mathcal{Z}^n=\left\{\bs{z}^n=(\bs{z}_1,\bs{z}_2,\ldots,\bs{z}_n) \in \mathbb{C}^{N_R\times n}   \right\}, \nonumber
\end{align}
and with errors of the first and second kind that satisfy
\begin{align}
&  W^n((\setd_i(y^n))^c|\mathbf{u}_i)  \leq \lambda_1 \nonumber \\
&  W^n(\setd_i(y^n)|\mathbf{u}_j) \leq \lambda_2, \forall i \neq j.
\nonumber \end{align}\end{definition}
\begin{remark}
The definitions of identification codes for the single-user MIMO channel are similar to the SISO case, except for the dimension of input and output sets. Indeed, at each time instant $i \in \{1,\ldots,n\}$, we send $N_T$ scalar signals and receive $N_R$ signals. Thus, compared to the SISO case, the input and output sets contain matrices instead of vectors.
\end{remark}
 \subsection{CR-assisted Secure Identification over Gaussian Wiretap Channels}
 \begin{figure*}[!t]
    \centering
    \tikzstyle{block} = [draw, top color=white, bottom color=white!80!gray, rectangle, rounded corners,
minimum height=2em, minimum width=3cm]
\tikzstyle{blockchannel} = [draw, top color=white, bottom color=white!80!gray, rectangle, rounded corners,
minimum height=2cm, minimum width=.1cm]
\tikzstyle{input} = [coordinate]

\usetikzlibrary{arrows}
\scalebox{.68}{
\begin{tikzpicture}[auto, node distance=2cm,>=latex']
\node[] (m) {\small $M$};
\node[block,right=1cm of m] (enc) {\small Encoder};
\node[blockchannel, right=1.5cm of enc](channel) {\small Channel
$(W_{g},V{g'})$};
\node[block, above=.7cm of channel] (source) {$P_{XY}$};
\node[block,right= 1.5 cm of channel.390] (bob) {\small Decoder};
\node[block,right=1.5 cm of channel.330] (eve) {\small Eavesdropper};
\node[right=1cm of bob] (what) {\small Is $\hat{M}$ sent? Yes,No?};
\node[right=1cm of eve] (wbar) {\small Is $M'$ sent? Yes,No? }; %we can use \cancel{W} but it doesn't work with package cancel
\node[coordinate,right=5cm of source] (i) {};
\node[coordinate, below=2cm of i] (g) {};
\draw[->] (m) -- (enc);
\draw[->] (enc) -- node[above]{$T^{m}$}  (channel);
\draw[->] (channel.390) -- node[above]{$Z^{m}$} (bob);
\draw[->] (channel.330) -- node[above]{$Z'^{m}$} (eve);
\draw[->] (bob) -- (what);
\draw[->] (eve) -- (wbar);
\draw[->] (source) -- node[above]{$X^n$} (enc);
\draw[->] (source) -- node[above]{$Y^n$} (bob);
\draw[-] (source) -- node[above right]{$(X^n,Y^n)$}(i);
\draw[-, dashed] (i) -- (g);
\draw[->] (g) -|  (eve);
\end{tikzpicture}}
    \caption{CR-assisted secure identification.}
    \label{fig:secureId}
\end{figure*}
We focus now on CR-assisted secure identification
depicted in Fig. \ref{fig:secureId}. We consider the following standard model of the Gaussian wiretap channel (GWC):
\begin{equation}
\begin{aligned}\Wg\colon Z_i =T_i+\xi_i, &&  \forall i \in \{1,\ldots,m\}, \\
\Vg \colon \Zp_i =T_i + \phi_i, && \forall i \in \{1,\ldots,m\},
 \end{aligned} \label{channelModel}
 \end{equation} where $T^m=(T_1,T_2,\ldots,T_m)$ corresponds to the channel input sequence, and where  $Z^m=(Z_1,Z_2,\ldots,Z_m)$ and $Z^{\prime m}=(\Zp_1,\Zp_2,\ldots,\Zp_m)$ are Bob and Eve's observations, respectively. $\xi^m=(\xi_1,\xi_2,\ldots,\xi_m)$ and $\phi^m=(\phi_1,\phi_2,\ldots,\phi_m)$ are the noise sequences of the main channel and the wiretapper's channel, respectively. $m$ denotes the number of channel uses.
 The $\xi_i's$ are i.i.d. and each $\xi_i$ is drawn from a normal distribution denoted by $\text{g}$ with zero-mean and variance $\sigma^2$. The $\phi_i's$ are i.i.d. and each $\phi_i$ is drawn from a normal distribution denoted by $\text{g}'$ with zero-mean and variance $\sigma'^2$. The channel input fulfills the following power constraint: \begin{equation}  \mathbb{E}[|T_{i}|^{2}]\leq P  \ \ i=1\dots m. \label{powerConstr}
 \end{equation}The input set is $\mathcal{T}_{m,P}$, defined in \eqref{inputConstrained}. 
The output sets are infinite $\setz=\setz^\prime= \mathbb{C}$. We denote the GWC by the pair $(W_{\text{g}},V_{\text{g}'})$, where $\Wg$ and $\Vg$ define the Gaussian channels to the legitimate receiver and the wiretapper, with capacities $C(\text{g},P)$ and $C(\text{g}',P),$ respectively.

\begin{definition} \label{IDwiretapCode}
 A CR-assisted $(m,N,\lambda_1,\lambda_2)$ identification code for the GWC $(\Wg,\Vg)$ is a family of pairs $\{(\bs{u}_i,\setd_i(y^n)),\ i=1,\dots,N\},$ with \begin{align*} \bs{u}_i=\Phi(x^n) \in \mathcal{T}_{m,P},\ \mathcal D_{i}(y^n) \subset {\setz} ^{m}, && \forall ~i\in \{1,\ldots,N\}, \end{align*}
such that for all $i,j \in \{1,\ldots,N\}$, $i \neq j$ and some $\sete=\sete(x^n,y^n) \in {\setz^{\prime}}^m$and $\lambda_{1},\lambda_{2}\leq \frac{1}{2},$ the errors of first and second kind satisfy, respectively:
\begin{align}
\Wg^m((\setd_i(y^n))^c|\bs{u}_i)  &  \leq\lambda_1,  \label{firstRequirement} \\
\Wg^m(\setd_i(y^n)|\bs{u}_j)  &\leq\lambda_2, \label{secondRequirement}
 \end{align}
 \text{and for} $\lambda \leq \frac{1}{2},$ \text{it holds that}
 \begin{align}
 \Vg^m(\sete|\bs{u}_j)   + \Vg^m(\sete^c|\bs{u}_i) & \geq 1-\lambda. \label{wiretapCond}
\end{align}
\end{definition}
\begin{remark}
The last line \eqref{wiretapCond} means that the wiretapper cannot identify the identification message $i$ \cite{Igor}.
\end{remark}
%\begin{remark}
%In case the eavesdropper knows the correlation source signals, then the definition of CR-assisted identification code for the GWC is analogous  to Definition  \ref{IDwiretapCode}. The only difference consists in the set $\sete$, which now depends on the sequences $x^n$ and $y^n$ such that $ \sete(x^n,y^n)\in \mathcal{Z}'.$
%\end{remark}
\begin{definition}
  $C_{SID}^{c}(g,g',P)$, the secure CR-assisted  identification capacity of the channel $(W_{g},V_{g'}),$ is defined as follows:
  \begin{align}
  &C_{SID}^{c}(g,g',P)= \nonumber \\
 &\max\left\{ R\colon \forall \lambda>0,\ \exists n(\lambda) \text{ s.t. for } n \geq n(\lambda) \ N_S(n,\lambda) \geq 2^{2^{nR}}         \right\}, \nonumber
\end{align}
where $N_S(n,\lambda)$ is the maximal cardinality such that a $(n,N,\lambda_1,\lambda_2)$ CR-assisted identification wiretap code for the channel $(W_{g},V_{g'})$ exists. 
\end{definition}
\begin{remark}
\label{remarkcorrelation}
As correlation cannot increase the Shannon message-transmission capacity, it is not utilized in current communication systems. However, this is not the case for identification. We will demonstrate in Section \ref{sec4} that for the identification task, we can achieve performance gains by taking advantage of CR.
\end{remark}
\section{Common Randomness Capacity}
\label{sec3}
In this section, we propose a single-letter characterization of the CR capacity for the scenarios presented in the previous section and provide a rigorous proof of it.
\subsection{SISO Case}
We start with the first scenario depicted in Fig. \ref{correlatedSISO}, where the communication is over a SISO Gaussian channel with the power constraint defined in $\eqref{energyconstraintSISOCorrelated}$.
\begin{proposition}
\text{For the model in Fig}. \ref{correlatedSISO}, the CR capacity $C_{CR}(P)$ is equal to
\begin{align}
C_{CR}(P)= \underset{\substack{U \\{\substack{U \circlearrow{X} \circlearrow{Y}\\ I(U;X)-I(U;Y) \leq C(P)}}}}{\max} I(U;X).  \label{CRcapacity formula}
\end{align}
\label{proposition1}
\end{proposition}
\subsubsection{Direct Proof:} We extend the coding scheme provided in \cite{part2} to Gaussian channels. By continuity, it suffices to show that 
$$ \underset{ \substack{U \\{\substack{U \circlearrow{X} \circlearrow{Y}\\ I(U;X)-I(U;Y) \leq R'}}}}{\max} I(U;X)  $$ is an achievable CR rate for every $R'<C(P).$
Let $U$ be a random variable satisfying $U \circlearrow{X} \circlearrow{Y}$ and $I(U;X)-I(U;Y) \leq R'.$ We are going to show that $H=I(U;X)$ is an achievable CR rate. Let $\alpha,\delta>0$. Without loss of generality, assume that the distribution of $U$ is a possible type for block-length $n$.
For some $\mu>0,$ we let
\begin{align}
N_{1}&=\lfloor 2^{n[I(U;X)-I(U;Y)+3\mu]} \rfloor \nonumber
\end{align}
and
\begin{align}
N_{2}&=\lfloor 2^{n[I(U;Y)-2\mu]}\rfloor. \nonumber
\end{align}For each pair $(i,j)$ with $1\leq i \leq N_{1}$ and $1\leq j \leq N_{2}$, we define a random sequence $\bs{U}_{i,j}\in\mathcal{U}^n$ of type $P_{U}$. Let $\mbf M=\bs{U}_{1,1},\hdots, \bs{U}_{N_{1},N_{2}}$  be the joint random variable of all $\bs{U}_{i,j}s.$ We define $\Phi_{\mbf M}$ as follows:
 Let $\Phi_{\mbf M}(X^n)=\bs{U}_{ij}$, if $\bs{U}_{ij}$ is jointly $UX$-typical with $X^n$ (either one if there are several). If no such $\bs{U}_{i,j}$ exists, then  $\Phi_{\mbf M}(X^n)$ is set to a constant sequence $\bs{u}_0$ different from all the  ${\bs{U}_{ij}}s$, jointly $UX$-typical with none of the realizations of $X^n$ and known to both terminals.
 
We further define the following two sets which depend on $\mbf M$:
\begin{align}
    S_{1}(\mbf M)&=\{(x^n,y^n):(\Phi_{\mbf M}(x^n),x^n,y^n) \in \mathcal{T}_{U,X,Y}^{n}\} \nonumber
\end{align} and
\begin{align}
    S_{2}(\mbf M) 
    &=\Big\{(x^n,y^n):(x^n,y^n)  \in S_{1}(\mbf M) \ \text{s.t.} \ \bs{U}_{i,j}=\Phi_{\mbf M}(x^n) \nonumber \\   & \ \ \ \  \ \text{and} \ \exists \ \bs{U}_{i,\ell}\neq\bs{U}_{i,j} \ \text{jointly} \ UY\text{-typical with} \ y^n \nonumber \\
    & \ \ \ \ \ (\text{with the same first index} \ i)
\Big\}.\nonumber
\end{align}
It is proved in \cite{part2} that 
\begin{align}
    \mathbb{E}_{\mbf M}\left[ \mbb P\left[(X^n,Y^n)\notin  S_{1}(\mbf M)\right]+\mbb P\left[(X^n,Y^n)\in  S_{2}(\mbf M)\right]\right] \leq \beta(n),
    \label{averagebeta}
\end{align}
where $\beta(n) \leq \frac{\alpha}{4}$ for sufficiently large $n$. 
We choose a realization $$\mbf m=\bs{u}_{1,1},\hdots, \bs{u}_{N_1,N_2}$$ satisfying
\begin{align}
\mbb P\left[(X^n,Y^n)\notin  S_{1}(\mbf m)\right]+\mbb P\left[(X^n,Y^n)\in  S_{2}(\mbf m)\right]\leq 2\beta(n). \label{choiceofm}
\end{align} 
From \eqref{averagebeta} and using Markov inequality, we know that such a realization exists. We denote $\Phi_{\mbf m}$ by $\Phi.$
We assume that each $\bs{u}_{i,j}, i=1\hdots N_1, j=1\hdots N_2,$  is known to both terminals.  
This means that  $N_{1}$ codebooks $C_{i}, 1\leq i \leq N_{1}$, are known to both terminals, where each codebook contains $N_{2}$ sequences, $ \bs{u}_{i,j}, \ j=1,\hdots, N_2$. 

Let $x^n$ be any realization of $X^n$ and $y^n$ be any realization of $Y^n.$
 %Let $\Phi(x^n)=\bs{u}_{ij}$, if $\bs{u}_{ij}$ is jointly $UX$-typical with $x^n$ (either one if there are several).
 Let $f_1(x^n)=i$ if $\Phi(x^n)=\bs{u}_{i,j}$. Otherwise, if $\Phi(x^n)=\bs{u}_{0},$ then $f_1(x^n)=N_1+1.$
 %different from all the ${\bs{u}_{ij}}s$, jointly $UX$-typical with none of the realizations of $X^n$ and known to both terminals.
  Since $ R'<C(P)$, we choose $\mu$ to be sufficiently small such that
      \begin{align}
     \frac{\log \lVert f_1 \rVert}{n}&=\frac{\log(N_1+1)}{n} \nonumber \\
     &\leq C(P)-\mu',
     \label{inequalitylogfSISO}
      \end{align}
for some $\mu'>0,$
 The message $i^\star=f_1(x^n)$, with $i^\star\in\{1,\hdots,N_1+1\},$ is encoded to a sequence $t^n$ using a code sequence $(\Gamma^\star_n)_{n=1}^{\infty},$ using a suitable \textit{forward error correcting code}, with rate $\frac{\log \lVert \Gamma^\star_n \rVert}{n}=\frac{\log \lVert f_1 \rVert}{n}$ satisfying \eqref{inequalitylogfSISO}
 and with maximum error probability not exceeding $\frac{\alpha}{2}$ for sufficiently large $n.$
  Here, $\lVert f_1 \rVert$ refers to the cardinality of the set of messages $\{i^\star:i^\star=1,\hdots,N_1+1\}
$.
   The sequence $t^n$ is sent over the Gaussian channel. Let $z^n$ be the corresponding channel output sequence. Terminal $B$ decodes the message $\tilde{i}^\star$ from the knowledge of $ z^n.$
Let $\Psi(y^n,z^n)=\bs{u}_{\tilde{i}^\star,j}$ if $\bs{u}_{\tilde{i}^\star,j}$ and $y^n$ are jointly $UY$-typical . If there is no such  $\bs{u}_{\tilde{i}^\star,j}$ or there are several, we set $\Psi(y^n,z^n)=\bs{u}_0$ (since $K$ and $L$ must have the same alphabet).
Now, we are going to show that the requirements in $\eqref{errorcorrelated}$ and $\eqref{ratecorrelated}$ are satisfied.
We define next for any $(i,j)\in \{1,\hdots,N_1\}\times\{1,\hdots,N_2\}$  the set
$$\mc S=\{ x^n\in\mathcal{X}^{n} \ \text{s.t.} \ (\bs{u}_{i,j},x^n) \ \text{jointly} \ UX\text{-typical}\}.$$
Then, it holds that 
\begin{align}
&\mbb P[K=\bs{u}_{i,j}] \nonumber \\ &=\sum_{x^n\in\mc S}\mbb P[K=\bs{u}_{i,j}|X^n=x^n]P_{X}^n(x^n) \nonumber \\  &\quad+\sum_{x^n\in\mc S^c}\mbb P[K=\bs{u}_{i,j}|X^n=x^n]P_{X}^n(x^n) \nonumber \\
&\overset{(a)}{=}\sum_{x^n\in\mc S}\mbb P[K=\bs{u}_{i,j}|X^n=x^n]P_{X}^n(x^n) \nonumber \\
&\leq \sum_{x^n\in\mc S}P_{X}^n(x^n) \nonumber \\
&=P_{X}^{n}(\{x^n: (\bs{u}_{i,j},x^n) \ \text{jointly} \ UX\text{-typical}\}) \nonumber\\
& = 2^{-nI(U;X)-\kappa(n)}, \nonumber
\end{align}
for some $\kappa(n)>0$ with $\underset{n\rightarrow \infty}{\lim} \frac{\kappa(n)}{n}=0$,
where $(a)$ follows because for  $(\bs{u}_{i,j},\mathbf{x})$ being not jointly $UX$-typical, we have $\mbb P[K=\bs{u}_{i,j}|X^n=x^n]=0.$ This yields
\begin{align}
H(K)\geq nI(U;X)-\kappa'(n)
\nonumber \end{align}
for some $\kappa'(n)>0$ with $\underset{n\rightarrow \infty}{\lim} \frac{\kappa'(n)}{n}=0.$
Therefore, for sufficiently large $n,$ it holds that
\begin{align}
    \frac{H(K)}{n}>H-\delta. \nonumber
\end{align}
Thus, (\ref{ratecorrelated}) is satisfied. Now, it remains to prove that \eqref{errorcorrelated} is satisfied. For this purpose, we define the following event:
\begin{align}
    \mathcal{D}_{\mbf m}= ``\Phi(X^n) \ \text{is equal to none of the} \  {\bs{u}_{i,j}}s". \nonumber
\end{align}
We denote its complement by $\mc D_{\mbf m}^{c}.$
We further define $I^\star=f_1(X^n)$ to be the random message generated by Terminal $A$ and  $\tilde{I}^\star$ to be the random message decoded by Terminal $B$. 
We have
\begin{align}
    &\mbb P[K\neq L] \nonumber \\ &=\mbb P[K\neq L|I^\star=\tilde{I}^\star]\mbb P[I^\star=\tilde{I}^\star] \nonumber \\ &\quad+ \mbb P[K\neq L|I^\star\neq \tilde{I}^\star]\mbb P[I^\star\neq\tilde{I}^\star] \nonumber \\
        &\leq \mbb P[K\neq L|I^\star=\tilde{I}^\star]+ \mbb P[I^\star\neq\tilde{I}^\star].\nonumber
\end{align}
Here,
\begin{align}
    &\mbb P[K\neq L|I^\star=\tilde{I}^\star] \nonumber \\
   &= \mbb P[K\neq L|I^\star=\tilde{I}^\star,\mathcal{D}_{\mbf m}]\mbb P[\mathcal{D}_{\mbf m}|I^\star=\tilde{I}^\star] \nonumber \\ &\quad + \mbb P[K\neq L|I^\star=\tilde{I}^\star,\mathcal{D}_{\mbf m}^c]\mbb P[\mathcal{D}_{\mbf m}^c|I^\star=\tilde{I}^\star] \nonumber \\
   &\overset{(a)}{=}\mbb P[K\neq L|I^\star=\tilde{I}^\star,\mathcal{D}_{\mbf m}^c]\mbb P[\mathcal{D}_{\mbf m}^c|I^\star=\tilde{I}^\star] \nonumber \\
   &\leq \mbb P[K\neq L|I^\star=\tilde{I}^\star,\mathcal{D}_{\mbf m}^c],\nonumber
\end{align}
where $(a)$ follows from $\mbb P[K\neq L|I^\star=\tilde{I}^\star,\mathcal{D}_{\mbf m}]=0,$ since conditioned on $I^\star=\tilde{I}^\star$ and $\mathcal{D}_{\mbf m}$, we know that $K$ and $L$ are both equal to $\bs{u}_0.$ Thus, we obtain
\begin{align}
    \mbb P[K\neq L] 
    &\leq \mbb P[K\neq L|I^\star=\tilde{I}^\star,\mathcal{D}_{\mbf m}^c]+ \mbb P[I^\star\neq\tilde{I}^\star] \nonumber \\
    &\leq \mbb P\left[ \mc A(\mbf m)\cup \mc B(\mbf m)\right]+\mbb P[I^\star\neq\tilde{I}^\star] \nonumber \\
    &\overset{(a)}{=}\mbb P\left[\mc A(\mbf m)\right]+\mbb P\left[\mc B(\mbf m)\right] +\mbb P[I^\star\neq\tilde{I}^\star] \nonumber \\
    &\overset{(b)}{\leq} 2\beta(n)+ \mbb P[I^\star\neq\tilde{I}^\star] \nonumber \\
    &\leq \alpha,
\nonumber \end{align}
where $(a)$ follows because the events $\mc A(\mbf m)$ and $\mc B(\mbf m)$ are independent, $(b)$ follows from \eqref{choiceofm}
and $(c)$ follows because $2\beta(n)+\theta\leq \alpha$ for sufficiently large $n.$
 This proves \eqref{errorcorrelated}. 
This completes the achievability proof.
\subsubsection{Converse Proof:}
Let $H$ be any achievable CR rate. So, for every $\alpha,\delta>0$ and for sufficiently large $n,$ there exists a permissible pair of random variables $(K,L)$ according to a fixed CR-generation protocol of block-length $n$  such that 
\begin{equation}
  \mbb P\left[K\neq L\right]\leq \alpha, 
    \label{errorcorrelated1}
\end{equation}
and
\begin{equation}
    \frac{1}{n}H(K)> H-\delta.
     \label{ratecorrelated1}
\end{equation}
In our proof, we will use  the following lemma: 
\begin{lemma} (Lemma 17.12 in \cite{codingtheorems})
 	For arbitrary random variables $R_1$ and $R_2$ and sequences of random variables $X^{n}$ and $Y^{n}$, it holds that
 	\begin{align}
 	&I(R_1;X^{n}|R_2)-I(R_1;Y^{n}|R_2) \nonumber \\
 	&=\sum_{i=1}^{n} I(R_1;X_{i}|X_{1},\dots, X_{i-1}, Y_{i+1},\dots, Y_{n},R_2) \nonumber \\ &\quad -\sum_{i=1}^{n} I(R_1;Y_{i}|X_{1},\dots, X_{i-1}, Y_{i+1},\dots, Y_{n},R_2) \nonumber \\
 	&=n[I(R_1;X_{J}|V)-I(R_1;Y_{J}|V)],\nonumber
 	\end{align}
 	where $V=(X_{1},\dots, X_{J-1},Y_{J+1},\dots, Y_{n},R_2,J)$, with $J$ being a random variable independent of $R_1$,\ $R_2$, \ $X^{n}$ \ and $Y^{n}$ and uniformly distributed on $\{1 ,\dots, n \}$.
\label{lemma1}
\end{lemma}Let $J$ be a random variable uniformly distributed on $\{1,\dots, n\}$ and independent of $K$, $X^n$ and $Y^n$. We further define $U=(K,X_{1},\dots, X_{J-1},Y_{J+1},\dots, Y_{n},J).$ It holds that $U \circlearrow{X_J} \circlearrow{Y_J}.$ 
 Notice  that
 \begin{align}
 		\frac{1}{n}H(K)&\overset{(a)}{=}\frac{1}{n}H(K)-\frac{1}{n}H(K|X^{n})\nonumber\\
 		&=\frac{1}{n}I(K;X^{n}) \nonumber\\
 		&\overset{(b)}{=}\frac{1}{n}\sum_{i=1}^{n} I(K;X_{i}|X_{1},\dots, X_{i-1}) \nonumber\\
 		&=I(K;X_{J}|X_{1},\dots, X_{J-1},J) \nonumber\\
 		&\overset{(c)}{\leq }I(U;X_{J}), \nonumber
 		\end{align}
   where $(a)$ follows because $K=\Phi(X^n)$ and $(b)$ and $(c)$ follow from the chain rule for mutual information.
\color{black} 		
 Applying Lemma \ref{lemma1} for $R_1=K$, $R_2=\varnothing$ with $V=(X_1,\hdots, X_{J-1},Y_{J+1},\hdots, Y_{n},J)$ yields 
 \begin{align}
 		&\frac{1}{n}\left[I(K;X^{n})-I(K;Y^{n})\right] \nonumber \\
 		&=I(K;X_{J}|V)-I(K;Y_{J}|V) \nonumber\\
 		&\overset{(a)}{=}I(KV;X_{J})-I(V;X_{J})-I(KV;Y_{J})+I(V;Y_{J})\nonumber\\ 
 		&\overset{(b)}{=}I(U;X_{J})-I(U;Y_{J}), 
 		\label{UhilfsvariableMIMO1}
 		\end{align}where $(a)$ follows from the chain rule for mutual information and from the fact that $V$ is independent of $(X_{J},Y_{J})$ and $(b)$ follows from $U=(K,V)$.  
   It results using (\ref{UhilfsvariableMIMO1}) that
 \begin{align}
 		I(U;X_{J})-I(U;Y_{J})
 		&=\frac{1}{n}\left[I(K;X^{n})-I(K;Y^{n})\right] \nonumber\\
 		&=\frac{1}{n}H(K)-\frac{1}{n}I(K;Y^{n})\nonumber \\ 
 		&=\frac{1}{n}H(K|Y^n) 
 		\nonumber \\
            &=\frac{1}{n}H(K|Y^n,Z^n)+\frac{1}{n}I(K;Z^n|Y^n) \nonumber \\
            &\leq \frac{1}{n}H(K|L) +\frac{1}{n}I(K;Z^n|Y^n)\nonumber \\
       &\overset{(a)}{\leq} \frac{1}{n}\left(1+\log\lvert \mc K\rvert \mbb P\left[K\neq L\right]\right) +\frac{1}{n}I(K;Z^n|Y^n) \nonumber \\
       &\overset{(b)}{\leq} \frac{1}{n}+\alpha \log\lvert \mc X \rvert+\frac{1}{n}I(K;Z^n|Y^n). \nonumber
   \end{align}
   where $(a)$ follows from Fano's inequality and $(b)$ follows from \eqref{errorcorrelated1} and from the fact $\lvert \mc K \rvert \leq \lvert \mc X \rvert^{n}.$
   On the one hand, we have
\begin{align} 
\frac{1}{n}I(K;Z^{n}|Y^{n})&\leq \frac{1}{n}I(X^{n},K;Z^{n}|Y^{n}) \nonumber\\
& \overset{(a)}{\leq } \frac{1}{n}I(T^{n};Z^{n}|Y^{n})  \nonumber \\
& = \frac{1}{n}h(Z^{n}|Y^{n})- \frac{1}{n}h(Z^{n}|T^{n},Y^{n}) \nonumber \\
& \overset{(b)}{=} \frac{1}{n}h(Z^{n}|Y^{n})- \frac{1}{n}h(Z^{n}|T^{n}) \nonumber \\
& \overset{(c)}{\leq }  \frac{1}{n}h(Z^{n})- \frac{1}{n}h(Z^{n}|T^{n}) \nonumber \\
& = \frac{1}{n}I(T^{n};Z^{n}),  \nonumber \end{align}
where $(a)$ follows from the Data Processing Inequality because $Y^{n}\circlearrow{X^{n}K}\circlearrow{T^n}\circlearrow{Z^{n}}$ forms a Markov chain, $(b)$ follows because $Y^{n}\circlearrow{X^{n}K}\circlearrow{T^n}\circlearrow{Z^{n}}$ forms a Markov chain, $(c)$ follows because conditioning does not increase entropy,
On the other hand, we have:
\begin{align}
& \frac{1}{n}I(T^{n};Z^{n})  \nonumber \\
& \overset{(a)}{=} \frac{1}{n}\sum_{i=1}^{n} I(Z_{i};T^{n}|Z^{i-1}) \nonumber \\
& = \frac{1}{n}\sum_{i=1}^{n} h(Z_{i}|Z^{i-1})-h(Z_{i}|T^{n},Z^{i-1}) \nonumber \\
& \overset{(b)}{=} \frac{1}{n}\sum_{i=1}^{n} h(Z_{i}|Z^{i-1})-h(Z_{i}|T_{i}) \nonumber \\
& \overset{(c)}{\leq} \frac{1}{n}\sum_{i=1}^{n} h(Z_{i})-h(Z_{i}|T_{i}) \nonumber \\
& = \frac{1}{n}\sum_{i=1}^{n} I(T_{i};Z_{i}) \nonumber \\
& \overset{(d)}{\leq} C(P),
\nonumber \end{align}
 where $(a)$ follows from the chain rule for mutual information,
$(b)$ follows because $$T_{1},\dots, T_{i-1},T_{i+1},\dots, T_{n},Z^{i-1} \circlearrow{T_{i}}\circlearrow{Z_{i}}$$ forms a Markov chain, $(c)$ follows because conditioning does not increase entropy and
$(d)$ follows from \eqref{transmissioncapacitySISOcorrelated} and from the fact that  $T_{i}, i=1,\hdots, n$ satisfies the power constraint in \eqref{energyconstraintMIMOCorrelated}. 
   Thus, we obtain
   \begin{align}
       I(U;X_{J})-I(U;Y_{J})\leq C(P)+\zeta(n,\alpha),
   \nonumber \end{align}
where $\zeta(n,\alpha)=\frac{1}{n}+\alpha \log\lvert \mc X \rvert.$
Since the joint distribution of $X_{J}$ and $Y_{J}$ is equal to $P_{XY},$ it follows that $\frac{H(K)}{n}$ is upper-bounded by $I(U;X)$ subject to $I(U;X)-I(U;Y) \leq C(P) + \zeta(n,\alpha)$ with $U$ satisfying $U\circlearrow{X} \circlearrow{Y}.$ As a result, it follows using \eqref{ratecorrelated1} that any achievable CR rate satisfies 
\begin{align}
H <\underset{ \substack{U \\{\substack{U\circlearrow{X} \circlearrow{Y}\\ I(U;X)-I(U;Y) \leq C(P)+\zeta(n,\alpha)}}}}{\max} I(U;X)+\delta.
 \label{righthandsideconverse}
\end{align}
By taking the limit when $n$ tends to infinity and then the infimum over all $\alpha>0,\delta>0,$ of the right-hand side of \eqref{righthandsideconverse}, it follows that
\begin{align}
&H \leq \underset{ \substack{U \\{\substack{U \circlearrow{X} \circlearrow{Y}\\ I(U;X)-I(U;Y) \leq C(P)}}}}{\max} I(U;X). \nonumber
\end{align}
This completes the converse proof.

\begin{remark}
There exists  $P_{\star}$ such that $C(P_{\star})=H(X|Y)$, where 
\begin{align}
    C_{CR}(P)=C_{CR}(P_{\star})=H(X) \quad \forall P\geq P_{\star}.
\nonumber \end{align}
%Consider the case when $C(P)\geq H(X|Y).$ Then, there exists a $P_{\star}$ such that $C_{CR}(P_{\star})=H(X)$ for all $P$
%it holds that $C_{CR}(P)=H(X)$, where $H(X)$ is the best possible amount of common randomness that we can achieve with the available resources.
\begin{example}
\label{example1}
Consider the example of binary sources such that $|\mathcal{X}|=|\mathcal{Y}|=2$ with $P_{X}(0)=P_{X}(1)=\frac{1}{2}$. We consider the following transition probability 
\begin{align}
    P_{Y|X}(y|0)=\begin{pmatrix}1-\mu\\\mu\end{pmatrix} \quad 0\leq\mu\leq\frac{1}{2}
\nonumber \end{align}
\begin{align}
    P_{Y|X}(y|1)=\begin{pmatrix}\mu\\1-\mu\end{pmatrix} \quad 0\leq\mu\leq\frac{1}{2}
\nonumber \end{align}
with 
\begin{align}
    P_{XY}(x,y)=P_{Y|X}(y|x)P_{X}(x) \quad (x,y)\in\{0,1\}^2.
\nonumber \end{align}
In this case, it holds that

\begin{align}
    C_{CR}(P_{\star})=C_{CR}(P_{\star},P_{XY})=1
\nonumber \end{align}
and that 
\begin{align}
\frac{1}{2} \log(1+\frac{P_{\star}}{\sigma^{2}}) &= H(X|Y) \nonumber \\
   %  &= \underset{y \in \{0,1\}}{\sum}P_{Y}(y)\underset{x \in \{0,1\}}{\sum} P_{X|Y}(x|y) \log\left(\frac{1}{P_{X|Y}(x|y)}\right) \nonumber \\
     %&\overset{(a)}{=}\underset{y \in \{0,1\}}{\sum}\underset{x \in \{0,1\}}{\sum} P_{Y|X}(y|x) P_{X}(x) \log\left(\frac{P_{Y}(y)}{P_{Y|X}(y|x)P_{X}(x)}\right) \nonumber \\
     %&=\frac{1}{2} \underset{y \in \{0,1\}}{\sum} P_{Y|X}(y|0) \log\left(\frac{P_{Y}(y)}{\frac{1}{2}P_{Y|X}(y|0)}\right) \nonumber \\
     %&+\frac{1}{2} \underset{y \in \{0,1\}}{\sum} P_{Y|X}(y|1) \log\left(\frac{P_{Y}(y)}{\frac{1}{2}P_{Y|X}(y|1)}\right) \nonumber \\
     %&\overset{(b)}{=}\frac{1}{2} \underset{y \in \{0,1\}}{\sum} P_{Y|X}(y|0) \log\left(\frac{P_{Y|X}(y|0)+P_{Y|X}(y|1)}{P_{Y|X}(y|0)}\right) \nonumber \\
     %&+\frac{1}{2} \underset{y \in \{0,1\}}{\sum} P_{Y|X}(y|1) \log\left(\frac{P_{Y|X}(y|0)+P_{Y|X}(y|1)}{P_{Y|X}(y|1)}\right). \nonumber \\
     &=(1-\mu)\log(\frac{1}{1-\mu})+\mu\log(\frac{1}{\mu}). 
\nonumber \end{align}
We define 
\begin{align}
    f(\mu)=(1-\mu)\log(\frac{1}{1-\mu})+\mu\log(\frac{1}{\mu}).
\nonumber \end{align}
As a result, $P_{\star}$ is chosen such that:

\begin{align}
 P_{\star}=\sigma^{2}(2^{2f(\mu)}-1) \quad 0\leq\mu\leq\frac{1}{2}.
\nonumber \end{align}
In Fig. \ref{figplotbinary}, the channel power $P_{\star}$ is plotted as a function of the parameter $\mu$, with a fixed noise variance of $\sigma^2=1$. As $\mu$ increases, the correlation between the binary sources decreases. Consequently, the optimal power $P_{\star}$, starting at which the common randomness capacity is the highest possible, also increases, as depicted in Fig. \ref{figplotbinary}.
\begin{figure}[hbt!]
  \centering
  % This file was created by matlab2tikz.
%
%The latest updates can be retrieved from
%  http://www.mathworks.com/matlabcentral/fileexchange/22022-matlab2tikz-matlab2tikz
%where you can also make suggestions and rate matlab2tikz.
%
\definecolor{mycolor1}{rgb}{0.00000,0.44700,0.74100}%
\scalebox{.9}{
\begin{tikzpicture}

\begin{axis}[%
width=3in,
height=3in,
at={(0in,0in)},
scale only axis,
xmin=0,
xmax=0.5,
xlabel style={font=\color{white!15!black}},
xlabel={$\mu$},
ymin=0,
ymax=3,
ylabel style={font=\color{white!15!black}},
ylabel={$P_{\star}$},
axis background/.style={fill=white},
xmajorgrids,
ymajorgrids,
legend style={legend cell align=left, align=left, draw=white!15!black}
]
\addplot [color=mycolor1]
  table[row sep=crcr]{%
0	0\\
0.01	0.118516293041571\\
0.02	0.216622073770917\\
0.03	0.309289123831429\\
0.04	0.399182721030622\\
0.05	0.487401275304271\\
0.06	0.574495722785659\\
0.07	0.66076425972601\\
0.08	0.746368944540166\\
0.09	0.831390741387116\\
0.1	0.915858733228839\\
0.11	0.999767000128063\\
0.12	1.08308503035076\\
0.13	1.16576449062496\\
0.14	1.24774382823797\\
0.15	1.32895152169362\\
0.16	1.40930845647063\\
0.17	1.48872971592012\\
0.18	1.56712597022471\\
0.19	1.64440458234272\\
0.2	1.72047051030039\\
0.21	1.79522706000599\\
0.22	1.8685765262986\\
0.23	1.94042074893613\\
0.24	2.01066160271362\\
0.25	2.079201435678\\
0.26	2.14594346571374\\
0.27	2.21079214312393\\
0.28	2.27365348490583\\
0.29	2.33443538500229\\
0.3	2.39304790375616\\
0.31	2.44940353900347\\
0.32	2.50341748064133\\
0.33	2.5550078500502\\
0.34	2.60409592539903\\
0.35	2.6506063535927\\
0.36	2.6944673494125\\
0.37	2.73561088223965\\
0.38	2.77397285062798\\
0.39	2.80949324489656\\
0.4	2.84211629784053\\
0.41	2.87179062360433\\
0.42	2.89846934472108\\
0.43	2.92211020729389\\
0.44	2.94267568427567\\
0.45	2.96013306679237\\
0.46	2.97445454344943\\
0.47	2.98561726756053\\
0.48	2.99360341224115\\
0.49	2.99840021331627\\
0.5	3\\
};

\end{axis}

\end{tikzpicture}}%
  \caption{Channel power $P_\star$ in function of the parameter $\mu$, for a noise variance $\sigma^2=1.$}
  \label{figplotbinary}
\end{figure}
%$(a)$ follows from the Bayes rule and $(b)$ follows from the law of total probability.\\
\end{example}
\end{remark}
\subsubsection{Optimization Problem:}
In this section, we solve the constrained optimization problem presented in Proposition $\ref{proposition1}$.
We consider the same sources and transition probability as in Example $\ref{example1}$ ($P_{X}$ and $P_{Y|X}$ are given as in  Example $\ref{example1}$).
Assume that the random variable $U$ has alphabet $\mathcal{U}$, then by applying the Support Lemma \cite{codingtheorems}, it holds that the cardinality of the set $\mathcal{U}$ satisfies the following constraint \cite{part2}
\begin{equation}
    |\mathcal{U}|\leq |\mathcal{X}|+1.
\nonumber \end{equation}
For $U\circlearrow{X}\circlearrow{Y}$, it holds that
\begin{equation}
    I(U;X)-I(U;Y)=I(U;X|Y),
\nonumber \end{equation}
where
\begin{equation}
    I(U;X|Y)\leq H(X|Y).
\nonumber \end{equation}
We can write the optimization problem in \eqref{CRcapacity formula} as follows:
\begin{equation}
C_{CR}(P)= \underset{\substack{U \\{\substack{U \circlearrow{X} \circlearrow{Y}\\ I(U;X|Y) \leq \min\{C(P),H(X|Y)\}}} \\ |\mathcal{U}|\leq |\mathcal{X}|+1 }}{\max} I(U;X).  
\nonumber \end{equation}

 Let $\mathcal{U}=\{u_{1},u_{2},u_{3}\}$.
We define $\boldsymbol{\theta}$ as follows:
\begin{equation}
    \boldsymbol{\theta}\nonumber \\               =\begin{pmatrix}P_{UX}\left(u_{1},0\right)\\P_{UX}\left(u_{2},0\right)\\P_{UX}\left(u_{3},0\right)\\P_{UX}\left(u_{1},1\right)\\P_{UX}\left(u_{2},1\right)\\P_{UX}\left(u_{3},1\right)\end{pmatrix},
\nonumber \end{equation}
with
\begin{equation}
\left( \begin{array}{rrrrrr}
1 & 1 & 1 & 0 & 0 & 0 \\
0 & 0 & 0 & 1 & 1 & 1 \\
\end{array}\right) 
 \boldsymbol{\theta}=\begin{pmatrix}P_{X}\left(0\right) \\ P_{X}\left(1\right)\end{pmatrix}.
\nonumber \end{equation}
\\~\\
We obtain the following equivalent constrained optimization problem:

\begin{equation}
\label{originalproblem}
    \underset{\boldsymbol{\theta}\in\Theta: \ \forall i \in \{1,\hdots 5\}, \ g_{i}\left(\boldsymbol{\theta}\right)\leq 0}{\min} g_{0}\left(\boldsymbol{\theta}\right),
\nonumber \end{equation}
where $\Theta=\{\boldsymbol{\theta}: \boldsymbol{\theta}\geq 0  \ \text{and} \ \boldsymbol{1}\boldsymbol{\theta}=1       \}$
with $\boldsymbol{1}=\left(1,1,1,1,1,1\right).$
\\~\\
The objective function is
\begin{align}
  g_{0}\left(\boldsymbol{\theta}\right)=-I(U;X). 
\nonumber \end{align}
In addition, the constraint functions are expressed as follows:
\begin{align}
   g_{1}\left(\boldsymbol{\theta}\right) 
   =I(U;X|Y)-\min\{C(P),H(X|Y)\} 
  % &=\underset{y\in\mathcal{Y}}{\sum}P_{Y}\left(y\right)\underset{u\in\mathcal{U},x\in\mathcal{X}}{\sum}P_{UX|Y}\left(u,x|y\right)\log\left(\frac{P_{UX|Y}\left(u,x|y\right)}{P_{U|Y}\left(u|y\right)P_{X|Y}\left(x|y\right)}\right)\nonumber \\
  % &\ -\min\{C(P),H(X|Y)\}.\nonumber \\
  % &\overset{(a)}{= }\underset{u\in\mathcal{U}x\in\mathcal{X}y\in\mathcal{Y}}{\sum}P_{Y|X}\left(y|x\right)P_{UX}\left(u,x\right)\log\left(\frac{\frac{P_{Y|X}\left(y|x\right)}{P_{X|Y}\left(x|y\right)}P_{UX}\left(u,x\right)}{\underset{x\in\mathcal{X}}{\sum}\left(P_{UX}\left(u,x\right)P_{Y|X}\left(y|x\right)\right)}\right) \nonumber \\
   %& \ -\min\{C(P),H(X|Y)\}.
\nonumber \end{align} 
\begin{equation}
   g_{2}\left(\boldsymbol{\theta}\right)=\underset{u\in \mathcal{U}}{\sum} P_{UX}\left(u,0\right)-P_{X}\left(0\right) 
\nonumber \end{equation}
\begin{equation}
   g_{3}\left(\boldsymbol{\theta}\right)=-\underset{u\in \mathcal{U}}{\sum} P_{UX}\left(u,0\right)+P_{X}\left(0\right) 
\nonumber \end{equation}
\begin{equation}
   g_{4}\left(\boldsymbol{\theta}\right)=\underset{u\in \mathcal{U}}{\sum} P_{UX}\left(u,1\right)-P_{X}\left(1\right) 
\nonumber \end{equation}
\begin{equation}
   g_{5}\left(\boldsymbol{\theta}\right)=-\underset{u\in \mathcal{U}}{\sum} P_{UX}\left(u,1\right)+P_{X}\left(1\right). 
\nonumber \end{equation}

\begin{remark} The optimization problem is non-convex since the objective function 
$g_{0}\left(\boldsymbol{\theta}\right)$ is non-convex. The non-convexity of $g_{0}\left(\boldsymbol{\theta}\right)$ is shown in the appendix.   
\end{remark}

%and 
%\begin{align}
 %   \frac{\partial^{2} g_{1}\left(\boldsymbol{\theta}\right)}{\partial^{2} P_{UX}\left(u,x\right)}=
  %  &\underset{y \in \mathcal{Y}}{\sum} P_{Y|X}\left(y|x\right)\frac{\underset{x \in \mathcal{X}}{\sum}P_{UX|Y}\left(u,x|y\right)- P_{UX|Y}\left(u,x|y\right) }{P_{UX|Y}\left(u,x|y\right)\underset{x \in \mathcal{X}}{\sum}P_{UX|Y}\left(u,x|y\right) }>0
%\end{align}

%\end{proof}

%\subsection{Lagrangian Formulation}
To solve the optimization problem, we define the Lagrangian function $\mathcal{L}:\Theta \times \Lambda \to \mathbb{R}$
\begin{equation}
    \mathcal{L}\left(\boldsymbol{\theta},\boldsymbol{\lambda}\right)=g_{0}(\boldsymbol{\theta})+\sum_{i=1}^{5} \lambda_{i} g_{i}\left(\boldsymbol{\theta}\right),
    \label{Lagrangian}
\end{equation}where $\Lambda=\{\boldsymbol{\lambda}=\left(\lambda_{1},\lambda_{2},\lambda_{3},\lambda_{4},\lambda_{5}\right)^{T} \in \mathbb{R}^{5}: \boldsymbol{\lambda} \geq 0\}. $ 

Optimizing the Lagrangian is interpreted as playing a two-player zero-sum game: the first player chooses $\boldsymbol{\theta}$  that minimizes $\mathcal{L}\left(\boldsymbol{\theta},\boldsymbol{\lambda}\right)$ 
and the second player chooses $\bs{\lambda}$ that maximizes it.
A pure Nash equilibrium might in general not exist. However, a mixed Nash Equilibrium does exist \cite{twoplayermethod}.
In what follows, the relationship between an approximate mixed
Nash equilibrium of the Lagrangian game and a nearly-optimal nearly-feasible solution to \eqref{originalproblem} is characterized.

\begin{theorem}\cite{twoplayermethod}
\label{feasiblesolution}
Let $\bs{\theta}^{\left(1\right)} \hdots \bs{\theta}^{\left(T\right)} \in \Theta$ and $\bs{\lambda}^{\left(1\right)}\hdots \bs{\lambda}^{\left(T\right)} \in \Lambda$ be sequences of vectors that satisfy an approximate mixed Nash equilibrium, i.e.
\begin{equation}
    \underset{\lambda^{\star} \in \Lambda}{\max}\frac{1}{T}\sum_{t=1}^{T}\mathcal{L}\left(\bs{\theta}^{\left(t\right)},\bs{\lambda}^{\star}\right)-\underset{\bs{\theta}^{\star}\in\Theta}{\inf} \frac{1}{T}\sum_{t=1}^{T}\mathcal{L}\left(\bs{\bs{\theta}}^{\star},\bs{\lambda}^{\left(t\right)}\right) \leq \epsilon.
\nonumber \end{equation}
Define $\Bar{\bs{\theta}}$ such that $\Bar{\bs{\theta}}=\bs{\theta}^{\left(t\right)}$ with probability $\frac{1}{T}$. Then it holds that $\Bar{\bs{\theta}}$ is nearly-optimal in expectation, i.e.,
\begin{equation}
    \mathbb{E}_{\Bar{\bs{\theta}}}\left[g_{0}\left(\Bar{\bs{\theta}}\right)\right]\leq \underset{\bs{\theta}^{\star}\in\Theta:\forall i: g_{i}\left(\bs{\theta}^{\star}\right)\leq 0}{\inf}g_{0}\left(\bs{\theta}^{\star}\right)+\epsilon.
\nonumber \end{equation}
\end{theorem}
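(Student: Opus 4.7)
The plan is to exploit both halves of the approximate Nash inequality by picking judicious test points on each side: a feasible near-optimum on the minimization side, and the zero vector on the maximization side. Let $\bs{\theta}^{\text{opt}}$ denote a near-optimizer of the constrained program \eqref{originalproblem}, so that $g_i(\bs{\theta}^{\text{opt}}) \leq 0$ for all $i \in \{1,\ldots,5\}$ and $g_0(\bs{\theta}^{\text{opt}})$ is within an arbitrarily small slack of the infimum; I would absorb this slack into $\epsilon$ at the end.

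First, I would bound the infimum term in the Nash inequality from above by substituting $\bs{\theta}^{\text{opt}}$. Since $\bs{\lambda}^{(t)} \geq 0$ and each $g_i(\bs{\theta}^{\text{opt}}) \leq 0$, every Lagrangian term satisfies
\begin{equation*}
\mathcal{L}(\bs{\theta}^{\text{opt}},\bs{\lambda}^{(t)}) = g_0(\bs{\theta}^{\text{opt}}) + \sum_{i=1}^{5} \lambda_i^{(t)} g_i(\bs{\theta}^{\text{opt}}) \leq g_0(\bs{\theta}^{\text{opt}}),
\end{equation*}
so averaging gives $\inf_{\bs{\theta}^\star\in\Theta} \tfrac{1}{T}\sum_{t=1}^T \mathcal{L}(\bs{\theta}^\star,\bs{\lambda}^{(t)}) \leq g_0(\bs{\theta}^{\text{opt}})$. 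Next, on the maximization side I would choose the specific dual point $\bs{\lambda}^\star = \bs{0} \in \Lambda$, for which $\mathcal{L}(\bs{\theta}^{(t)},\bs{0}) = g_0(\bs{\theta}^{(t)})$, giving the lower bound
\begin{equation*}
\max_{\bs{\lambda}^\star\in\Lambda} \frac{1}{T}\sum_{t=1}^T \mathcal{L}(\bs{\theta}^{(t)},\bs{\lambda}^\star) \geq \frac{1}{T}\sum_{t=1}^T g_0(\bs{\theta}^{(t)}) = \mathbb{E}_{\bar{\bs{\theta}}}\bigl[g_0(\bar{\bs{\theta}})\bigr],
\end{equation*}
where the last equality uses the definition of $\bar{\bs{\theta}}$ as the uniform mixture over $\bs{\theta}^{(1)},\ldots,\bs{\theta}^{(T)}$.

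Finally, I would plug both bounds into the approximate Nash equilibrium hypothesis to conclude
\begin{equation*}
\mathbb{E}_{\bar{\bs{\theta}}}\bigl[g_0(\bar{\bs{\theta}})\bigr] - g_0(\bs{\theta}^{\text{opt}}) \leq \max_{\bs{\lambda}^\star\in\Lambda}\tfrac{1}{T}\sum_{t=1}^T \mathcal{L}(\bs{\theta}^{(t)},\bs{\lambda}^\star) - \inf_{\bs{\theta}^\star\in\Theta} \tfrac{1}{T}\sum_{t=1}^T \mathcal{L}(\bs{\theta}^\star,\bs{\lambda}^{(t)}) \leq \epsilon,
\end{equation*}
which rearranges to the claimed bound. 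The only delicate point is ensuring that $\bs{0} \in \Lambda$ and that $\bs{\theta}^{\text{opt}}$ lies in $\Theta$; both hold by the definitions of these sets given in \eqref{Lagrangian} and \eqref{originalproblem}. There is no genuine obstacle beyond this bookkeeping: the argument is a standard primal–dual weak-duality manipulation, and the nontrivial content of the theorem lies entirely in the hypothesis that an approximate Nash equilibrium can be produced by the no-regret dynamics used in \cite{twoplayermethod}, not in the deduction itself.
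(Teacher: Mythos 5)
Your proof is correct. The paper itself does not prove this theorem --- it is imported verbatim from \cite{twoplayermethod} --- so there is nothing internal to compare against; your argument is the standard weak-duality manipulation that underlies the near-optimality half of the cited result, and every step checks out: $\bs{0}\in\Lambda$ since $\Lambda=\{\bs{\lambda}\geq 0\}$, the feasible near-optimizer $\bs{\theta}^{\text{opt}}$ makes each term $\lambda_i^{(t)}g_i(\bs{\theta}^{\text{opt}})\leq 0$, and the slack in choosing $\bs{\theta}^{\text{opt}}$ can indeed be sent to zero at the end. One remark worth making explicit: the full theorem in \cite{twoplayermethod} also asserts near-\emph{feasibility} of $\bar{\bs{\theta}}$, which requires $\Lambda$ to be a bounded set (a truncated simplex of some radius) so that large constraint violations are penalized in the $\max$ over $\bs{\lambda}^{\star}$; with the unbounded $\Lambda$ as defined in this paper that part would have to be argued differently (the hypothesis then implicitly forces $\frac{1}{T}\sum_t g_i(\bs{\theta}^{(t)})\leq 0$ for the $\max$ to be finite). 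Since the statement as reproduced here only claims near-optimality in expectation, your proof is complete for what is asked.
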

%\subsection{Algorithm}
The following algorithm optimizes the Lagrangian function in \eqref{Lagrangian} in the non-convex setting.
\begin{algorithm}[H]
\caption{Lagrangian-formulation Optimization in the non-convex setting \cite{twoplayermethod}}
\label{nonconvexLagrangian}
\begin{enumerate}
\item Initialize $\boldsymbol{\lambda^{\left(1\right)}}=\boldsymbol{0}$
\item for $t \in [T]$
\begin{enumerate}
\item $\bs{\theta}^{\left(t\right)}=\mathcal{O}_{\rho}\left(\mathcal{L}\left(.,\bs{\lambda}^{\left(t\right)}\right)\right)$
\item  $\Delta_{\boldsymbol{\lambda}}^{\left(t\right)} $ gradient of $\mathcal{L}\left(\bs{\theta}^{\left(t\right)},\bs{\lambda}^{\left(t\right)}\right)$ w.r.t to $\bs{\lambda}$
\item Update $\boldsymbol{\lambda}^{\left(t+1\right)}=\Pi_{\Lambda}\left(\bs{\lambda}^{\left(t\right)}+\frac{\eta_{\lambda}}{\sqrt{\mathbf{G}_{\bs{\lambda},t}+\tau_{\lambda}}}\Delta_{\boldsymbol{\lambda}}^{\left(t\right)}\right)$
\end{enumerate}
\text{end} 
\item Return $\boldsymbol{\theta}^{\left(1\right)} \dots \boldsymbol{\theta}^{\left(T\right)}$ \quad $\boldsymbol{\lambda}^{\left(1\right)} \dots \boldsymbol{\lambda}^{\left(T\right)}$
\end{enumerate}
\end{algorithm}
The step a) consists of computing the  $\rho$-approximate Bayesian optimization oracle which is defined as follows:  
\begin{definition}\cite{twoplayermethod}
A $\rho$- approximate Bayesian optimization oracle is a function $\mathcal{O}_{\rho}:\left(\Theta \to \mathbb{R}\right) \to\Theta$ for which:
\begin{equation}
   f\left(\mathcal{O}_{\rho}\left(f\right)\right) \leq \underset{\bs{\theta}^{\star}\in\Theta}{\inf}f\left(\bs{\theta}^\star\right)+\rho.
\nonumber \end{equation}
\end{definition}
The gradient in step b) is expressed as follows:
\begin{align}
    \Delta_{\boldsymbol{\lambda}}= \left(g_{1}\left(\boldsymbol{\theta}\right),g_{2}\left(\boldsymbol{\theta}\right),g_{3}\left(\boldsymbol{\theta}\right),g_{4}\left(\boldsymbol{\theta}\right),g_{5}\left(\boldsymbol{\theta}\right)\right)^{T}.
\nonumber \end{align}
In step c), we perform first an AdaGrad \cite{AdaGrad} update, where $\eta_{\lambda}$ stands for the initial learning rate and $\tau_{\lambda}$ is a smoothing term.
In addition, $\mathbf{G}_{\bs{\lambda},t}$ is a diagonal matrix that contains the sum of the squares of the past gradients with respect to all parameters $\bs{\lambda}$ along its diagonal.
Second, we perform a projection onto $\Lambda$ such that:
\begin{equation}
\Pi_{\Lambda}\left(\bs{z}\right)=\max\left(\bs{0},\bs{z}\right).
\nonumber \end{equation}
We obtain $T$ candidate solutions $\boldsymbol{\theta}^{\left(1\right)} \dots \boldsymbol{\theta}^{\left(T\right)}$.
The goal is to yield a uniform distribution over these  $T$ candidates that is approximately feasible according to Theorem \ref{feasiblesolution}.
\begin{remark}
The approach proposed above is idealized \cite{twoplayermethod}. In practice, we opt for the typical approach:   pretending that our problem is convex and using a first-order stochastic algorithm such as AdaGrad . This is illustrated in Algorithm \ref{Lagrangianconvex}.
\end{remark}

\begin{algorithm}[H]
\caption{Lagrangian-formulation Optimization in the convex setting \cite{twoplayermethod}}
\label{Lagrangianconvex}
\begin{enumerate}
\item Initialize $\boldsymbol{\theta^{\left(1\right)}} \in \boldsymbol{\Theta}$ $\boldsymbol{\lambda^{\left(1\right)}}=\boldsymbol{0}$
\item for $t \in [T]$
\begin{enumerate}
\item $\check{\Delta}_{\boldsymbol{\theta}}^{\left(t\right)} $ sub-gradient of $\mathcal{L}\left(\boldsymbol{\theta}^{\left(t\right)},\boldsymbol{\lambda}^{\left(t\right)}\right)$ w.r.t to $\bs{\theta}$
\item  $\Delta_{\boldsymbol{\lambda}}^{\left(t\right)} $ gradient of $\mathcal{L}\left(\bs{\theta}^{\left(t\right)},\bs{\lambda}^{\left(t\right)}\right)$ w.r.t to $\bs{\lambda}$
 \item Update $\boldsymbol{\theta}^{\left(t+1\right)}=\Pi_{\Theta}\left(\bs{\theta}^{\left(t\right)}-\frac{\eta_{\theta}}{\sqrt{\mathbf{G}_{\bs{\theta},t}+\tau_{\theta}}}\Delta_{\boldsymbol{\theta}}^{\left(t\right)}\right)$
\item Update $\boldsymbol{\lambda}^{\left(t+1\right)}=\Pi_{\Lambda}\left(\bs{\lambda}^{\left(t\right)}+\frac{\eta_{\lambda}}{\sqrt{\mathbf{G}_{\bs{\lambda},t}+\tau_{\lambda}}}\Delta_{\boldsymbol{\lambda}}^{\left(t\right)}\right)$
\end{enumerate}
\text{end} 
\item Return $\boldsymbol{\theta}^{\left(1\right)} \dots \boldsymbol{\theta}^{\left(T\right)}$ \quad $\boldsymbol{\lambda}^{\left(1\right)} \dots \boldsymbol{\lambda}^{\left(T\right)}$
\end{enumerate}
\end{algorithm}
In step a), we compute the sub-gradient $\check{\Delta}_{\boldsymbol{\theta}}$, where it holds that \\

$\forall u \in \mathcal{U}:$
\begin{align}
 &\frac{\partial\mathcal{L}\left(\boldsymbol{\theta},\boldsymbol{\lambda}\right)}{\partial P_{UX}\left(u,0\right)}=-\log\left(\frac{P_{UX}\left(u,0\right)}{\underset{x'\in \mathcal{X}}{\sum}P_{UX}\left(u,x'\right)}\right) \nonumber \\ &+\lambda_{1}\underset{y\in\mathcal{Y}}{\sum} P_{Y|X}\left(y|0\right)\log\left(\frac{P_{Y|X}\left(y|0\right)P_{UX}\left(u,0\right)}{\underset{x'\in\mathcal{X}}{\sum}P_{Y|X}\left(y|x'\right)P_{UX}\left(u,x'\right)}\right) \nonumber \\
 &+\lambda_{2}-\lambda_{3} 
\nonumber \end{align}
%(\RM{1}) $\frac{\partial g_{0}\left(\boldsymbol{\theta}\right)}{\partial P_{UX}\left(u,x\right)}$ is computed  in $\eqref{derivg0}$ in the appendix.\\
and
\begin{align}
 &\frac{\partial\mathcal{L}\left(\boldsymbol{\theta},\boldsymbol{\lambda}\right)}{\partial P_{UX}\left(u,1\right)} \nonumber \\
 &=-\log\left(\frac{P_{UX}\left(u,1\right)}{\underset{x' \in \mathcal{X}}{\sum}P_{UX}\left(u,x'\right)}\right) \nonumber \\ &+\lambda_{1}\underset{y\in\mathcal{Y}}{\sum} P_{Y|X}\left(y|1\right)\log\left(\frac{P_{Y|X}\left(y|1\right)P_{UX}\left(u,1\right)}{\underset{x'\in\mathcal{X}}{\sum}P_{Y|X}\left(y|x\right)P_{UX}\left(u,x'\right)}\right) \nonumber \\
 &+\lambda_{4}-\lambda_{5}. 
\nonumber \end{align}
It is worth-mentioning here that $\frac{\partial g_{0}\left(\boldsymbol{\theta}\right)}{\partial P_{UX}\left(u,x\right)}$ is computed  in $\eqref{derivg0}$ and  that $\frac{\partial g_{1}\left(\boldsymbol{\theta}\right)}{\partial P_{UX}\left(u,x\right)}$ is computed for $U\circlearrow{X}\circlearrow{Y}$ in  $\eqref{derivg1estimate}$ (see Appendix).\\~\\
In step c) and step d), AdaGrad updates are performed, where $\eta_{\lambda}$ and $\eta_{\theta}$ correspond to  the initial learning rates and $\tau_{\lambda}$ and $\tau_{\theta}$ are smoothing terms.
Furthermore, $\mathbf{G}_{\bs{\lambda},t}$ and $\mathbf{G}_{\bs{\theta},t}$ are diagonal matrices that contain  the sum of the squares of the past gradients with respect to all parameters $\bs{\lambda}$ and $\bs{\theta}$ respectively, along their diagonal.

The projection $\Pi_{\Theta}\left(\bs{z}\right)$ onto $\Theta$ in step c) corresponds to the Euclidean projection onto the probability simplex which is computed using the following algorithm:

\begin{algorithm}[H]
\caption{Euclidean projection of a vector onto the probability simplex \cite{projectionsimplex}}
\begin{enumerate}
\item \textbf{Input:} $\boldsymbol{z} \in \mathbb{R}^{D}$ 
\begin{enumerate}
\item Sort $\boldsymbol{z}$ into $\boldsymbol{w}:w_{1}\geq w_{2} \geq \hdots \geq w_{D}$
\item Find $\gamma=\max\left(1\leq j \leq D: w_{j}+\frac{1-\sum_{i=1}^{j} w_{i}}{j}\right)$
\item Define $\kappa=\frac{1}{\gamma}\left(1-\sum_{i=1}^{\gamma}w_{i}\right)$
\end{enumerate}
\item $\textbf{Output:}\  \boldsymbol{\theta} \ \text{s.t} \ \theta_{i}=\max\{z_{i}+\kappa,0\} \  i=1\hdots D $
\end{enumerate}
\end{algorithm}
\subsubsection{Simulation Results:}
In this section, we present our numerical results. We study the CR capacity for different channel input powers as well as for different values of the parameter $\mu$. We fix $T=5000$.
  Algorithm $\ref{Lagrangianconvex}$ is implemented for given $P$ and given $\mu$ and for different values of the initial learning rates $\eta_{\theta}$ and $\eta_{\lambda}$.
  At the end, we consider the pair $\left(\eta_{\theta},\eta_{\lambda}\right)$ for which $\bs{\theta}^{\left(1\right)} \hdots \bs{\theta}^{\left(T\right)} \in \Theta$ and $\bs{\lambda}^{\left(1\right)}\hdots \bs{\lambda}^{\left(T\right)} \in \Lambda$ yield the smallest $\epsilon$ in Theorem \ref{feasiblesolution}.
  We vary first the parameter $\mu$ in $\{0,0.1,0.2,0.3,0.4,0.5\}$ and plot for each $\mu$ the common randomness capacity as a function of the power, as depicted in Fig. \ref{figplotmu}.
\begin{figure}[!t]
  \centering
  % This file was created by matlab2tikz.
%
%The latest updates can be retrieved from
%  http://www.mathworks.com/matlabcentral/fileexchange/22022-matlab2tikz-matlab2tikz
%where you can also make suggestions and rate matlab2tikz.
%
\definecolor{mycolor1}{rgb}{0.00000,0.44700,0.74100}%
\definecolor{mycolor2}{rgb}{1.00000,0.00000,1.00000}%
\definecolor{mycolor3}{rgb}{0.49412,0.18431,0.55686}%
\scalebox{.64}{\begin{tikzpicture}

\begin{axis}[%
width=4.521in,
height=3.57in,
at={(0.758in,0.482in)},
scale only axis,
xmin=0,
xmax=3.3,
xlabel style={font=\color{white!15!black}},
xlabel={\textbf{Power}},
ymin=0,
ymax=1.01,
ylabel style={font=\color{white!15!black}},
ylabel={\textbf{Common Randomness Capacity}},
axis background/.style={fill=white},
xmajorgrids,
ymajorgrids,
legend style={at={(0.681,0.368)}, anchor=south west, legend cell align=left, align=left, draw=white!15!black}
]
\addplot [color=mycolor1, mark=o, mark options={solid, mycolor1}]
  table[row sep=crcr]{%
0	0.999999999999956\\
0.4	0.999999999999956\\
0.443700637652136	0.999999999999956\\
0.487401275304271	0.999999999999956\\
0.815858733228839	0.999999999999956\\
0.865858733228839	0.999999999999956\\
0.915858733228839	0.999999999999956\\
1.25	0.999999999999956\\
1.28947576084681	0.999999999999956\\
1.32895152169362	0.999999999999956\\
1.62047051030039	0.999999999999956\\
1.67047051030039	0.999999999999956\\
1.72047051030039	0.999999999999956\\
2	0.999999999999956\\
2.039600717839	0.999999999999956\\
2.079201435678	0.999999999999956\\
2.29304790375616	0.999999999999956\\
2.34304790375616	0.999999999999956\\
2.39304790375616	0.999999999999956\\
2.6	0.999999999999956\\
2.62530317679635	0.999999999999956\\
2.6506063535927	0.999999999999956\\
2.74211629784053	0.999999999999956\\
2.79211629784053	0.999999999999956\\
2.84211629784053	0.999999999999956\\
2.9	0.999999999999956\\
2.93006653339619	0.999999999999956\\
2.96013306679237	0.999999999999956\\
2.98006653339619	0.999999999999956\\
3	0.999999999999956\\
3.1	0.999999999999956\\
3.2	0.999999999999956\\
3.3	0.999999999999956\\
};
\addlegendentry{$\mu=0$}

\addplot [color=red, mark=asterisk, mark options={solid, red}]
  table[row sep=crcr]{%
0	0\\
0.4	0.588559918279829\\
0.443700637652136	0.637623035994095\\
0.487401275304271	0.686527722945682\\
0.815858733228839	0.937647068777706\\
0.865858733228839	0.967702057356215\\
0.915858733228839	0.999999999999956\\
1.25	0.999999999999956\\
1.28947576084681	0.999999999999956\\
1.32895152169362	0.999999999999956\\
1.62047051030039	0.999999999999956\\
1.67047051030039	0.999999999999956\\
1.72047051030039	0.999999999999956\\
2	0.999999999999956\\
2.039600717839	0.999999999999956\\
2.079201435678	0.999999999999956\\
2.29304790375616	0.999999999999956\\
2.34304790375616	0.999999999999956\\
2.39304790375616	0.999999999999956\\
2.6	0.999999999999956\\
2.62530317679635	0.999999999999956\\
2.6506063535927	0.999999999999956\\
2.74211629784053	0.999999999999956\\
2.79211629784053	0.999999999999956\\
2.84211629784053	0.999999999999956\\
2.9	0.999999999999956\\
2.93006653339619	0.999999999999956\\
2.96013306679237	0.999999999999956\\
2.98006653339619	0.999999999999956\\
3	0.999999999999956\\
3.1	0.999999999999956\\
3.2	0.999999999999956\\
3.3	0.999999999999956\\
};
\addlegendentry{$\mu=0.1$}

\addplot [color=mycolor2, mark=square, mark options={solid, mycolor2}]
  table[row sep=crcr]{%
0	0\\
0.4	0.356715966801272\\
0.443700637652136	0.386983717545609\\
0.487401275304271	0.428427101978334\\
0.815858733228839	0.631816959696426\\
0.865858733228839	0.658549119279064\\
0.915858733228839	0.683883457664767\\
1.25	0.83559452771623\\
1.28947576084681	0.843763551865788\\
1.32895152169362	0.857478655511383\\
1.62047051030039	0.963165181222168\\
1.67047051030039	0.979631387926287\\
1.72047051030039	0.999999999999956\\
2	0.999999999999956\\
2.039600717839	0.999999999999956\\
2.079201435678	0.999999999999956\\
2.29304790375616	0.999999999999956\\
2.34304790375616	0.999999999999956\\
2.39304790375616	0.999999999999956\\
2.6	0.999999999999956\\
2.62530317679635	0.999999999999956\\
2.6506063535927	0.999999999999956\\
2.74211629784053	0.999999999999956\\
2.79211629784053	0.999999999999956\\
2.84211629784053	0.999999999999956\\
2.9	0.999999999999956\\
2.93006653339619	0.999999999999956\\
2.96013306679237	0.999999999999956\\
2.98006653339619	0.999999999999956\\
3	0.999999999999956\\
3.1	0.999999999999956\\
3.2	0.999999999999956\\
3.3	0.999999999999956\\
};
\addlegendentry{$\mu=0.2$}

\addplot [color=black, mark=diamond, mark options={solid, black}]
  table[row sep=crcr]{%
0	0\\
0.4	0.281817519041181\\
0.443700637652136	0.306438130873813\\
0.487401275304271	0.332024513242334\\
0.815858733228839	0.493966808169632\\
0.865858733228839	0.51539622052325\\
0.915858733228839	0.546315220239446\\
1.25	0.678089473689647\\
1.28947576084681	0.691981917457754\\
1.32895152169362	0.70578165286457\\
1.62047051030039	0.791303252027284\\
1.67047051030039	0.814731383542805\\
1.72047051030039	0.822057248705955\\
2	0.902119044114369\\
2.039600717839	0.913954214241758\\
2.079201435678	0.924869321919285\\
2.29304790375616	0.977052475715108\\
2.34304790375616	0.983699130001044\\
2.39304790375616	0.999999999999956\\
2.6	0.999999999999956\\
2.62530317679635	0.999999999999956\\
2.6506063535927	0.999999999999956\\
2.74211629784053	0.999999999999956\\
2.79211629784053	0.999999999999956\\
2.84211629784053	0.999999999999956\\
2.9	0.999999999999956\\
2.93006653339619	0.999999999999956\\
2.96013306679237	0.999999999999956\\
2.98006653339619	0.999999999999956\\
3	0.999999999999956\\
3.1	0.999999999999956\\
3.2	0.999999999999956\\
3.3	0.999999999999956\\
};
\addlegendentry{$\mu=0.3$}

\addplot [color=blue, mark=x, mark options={solid, blue}]
  table[row sep=crcr]{%
0	0\\
0.4	0.250774308425682\\
0.443700637652136	0.273432344381437\\
0.487401275304271	0.294864843309025\\
0.815858733228839	0.444000181244402\\
0.865858733228839	0.465749968724255\\
0.915858733228839	0.486409438391404\\
1.25	0.605092368278683\\
1.28947576084681	0.617731166123919\\
1.32895152169362	0.631301991695333\\
1.62047051030039	0.715931649892729\\
1.67047051030039	0.734142684993215\\
1.72047051030039	0.747079910465086\\
2	0.818608332157071\\
2.039600717839	0.828376036255501\\
2.079201435678	0.836081819090661\\
2.29304790375616	0.884489397155364\\
2.34304790375616	0.898295660341986\\
2.39304790375616	0.910082281116836\\
2.6	0.947506531298494\\
2.62530317679635	0.955855794786827\\
2.6506063535927	0.961520027432538\\
2.74211629784053	0.975788182321122\\
2.79211629784053	0.985510573540767\\
2.84211629784053	0.999999999999956\\
2.9	0.999999999999956\\
2.93006653339619	0.999999999999956\\
2.96013306679237	0.999999999999956\\
2.98006653339619	0.999999999999956\\
3	0.999999999999956\\
3.1	0.999999999999956\\
3.2	0.999999999999956\\
3.3	0.999999999999956\\
};
\addlegendentry{$\mu=0.4$}

\addplot [color=mycolor3, mark=triangle, mark options={solid, rotate=180, mycolor3}]
  table[row sep=crcr]{%
0	0\\
0.4	0.242133296905264\\
0.443700637652136	0.26406630084337\\
0.487401275304271	0.284904734068888\\
0.815858733228839	0.430238646456722\\
0.865858733228839	0.4495433628644\\
0.915858733228839	0.468789898026524\\
1.25	0.585073430831138\\
1.28947576084681	0.597418179335421\\
1.32895152169362	0.610034767527131\\
1.62047051030039	0.696302873659515\\
1.67047051030039	0.7098956058004\\
1.72047051030039	0.721529439300005\\
2	0.792560026699011\\
2.039600717839	0.803234637633183\\
2.079201435678	0.80905076091569\\
2.29304790375616	0.860184437220945\\
2.34304790375616	0.8708345776457\\
2.39304790375616	0.877549134656064\\
2.6	0.924088944526301\\
2.62530317679635	0.929445327381909\\
2.6506063535927	0.93471862530234\\
2.74211629784053	0.9511237332176\\
2.79211629784053	0.956696216850588\\
2.84211629784053	0.968357202120355\\
2.9	0.976814274877637\\
2.93006653339619	0.982326786340032\\
2.96013306679237	0.987782192681397\\
2.98006653339619	0.991330400411299\\
3	0.999999999999956\\
3.1	0.999999999999956\\
3.2	0.999999999999956\\
3.3	0.999999999999956\\
};
\addlegendentry{$\mu=0.5$}

\end{axis}

\begin{axis}[%
width=5.833in,
height=4.381in,
at={(0in,0in)},
scale only axis,
xmin=0,
xmax=1,
ymin=0,
ymax=1,
axis line style={draw=none},
ticks=none,
axis x line*=bottom,
axis y line*=left,
legend style={legend cell align=left, align=left, draw=white!15!black}
]
\end{axis}
\end{tikzpicture}}%
  \caption{CR capacity in function of the power for a noise variance $\sigma^2=1$ and for different values of $\mu.$}
  \label{figplotmu}
\end{figure}
Next, we generate a three-dimensional plot of the common randomness capacity as a function of both the power and the parameter $\mu$. This is illustrated in Fig. \ref{figplot3D}.
\begin{figure}[!t]
 \centering{\includegraphics[scale=0.42]{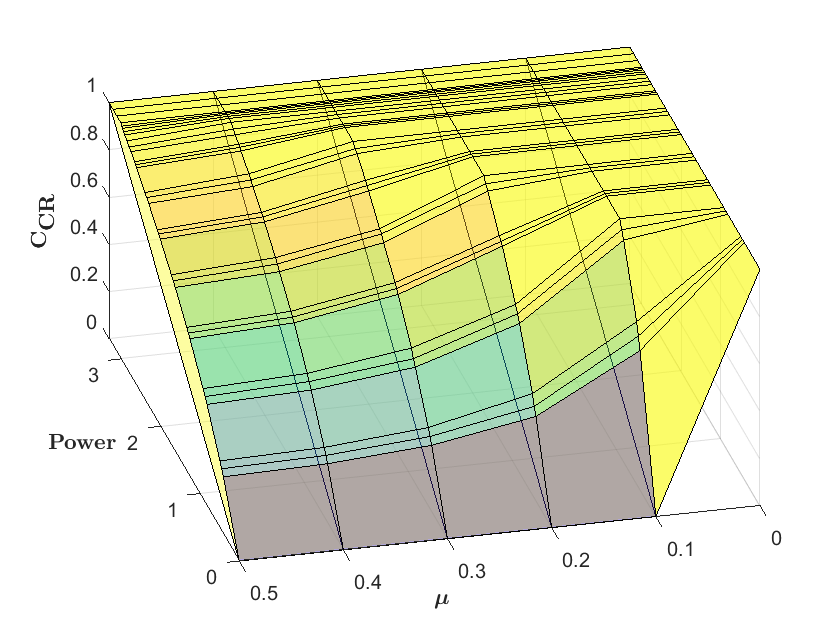}}
 %\centering{\includegraphics[scale=0.62]{figures/results3Dupdated2.eps}}
 \caption{CR capacity in function of the power and the parameter $\mu$ for a noise variance $\sigma^2=1.$}
 \label{figplot3D} 
\end{figure}
We consider first the specific case when $\mu=0$. Clearly, for $\mu=0$, it holds that $Y=X$. Therefore, no communication over the channel is required to achieve the maximal amount of common randomness equal to $H(X)$. This is numerically verified in Fig. \ref{figplotmu} and Fig. $\ref{figplot3D}$, where for $\mu=0$ and $P=0$ the common randomness capacity is equal to $H(X)=1.$
However, for $0<\mu\leq \frac{1}{2}$, a communication over the channel is necessary to generate common randomness between the two terminals $(C_{CR}\left(0\right)=0$ for $0<\mu\leq \frac{1}{2}$). This is due to the fact that $X$ and $Y$ have an indecomposable joint distribution \cite{part2} for $0<\mu\leq \frac{1}{2}$.
Furthermore, the higher the parameter $\mu$ is, the less correlated the sources are. As a result, we need to investigate more power in order to achieve the same amount of common randomness obtained for lower values of 
$\mu$. This is clearly observable in Fig. \ref{figplotmu}. For instance, for $\mu=0.2$, the common randomness capacity is equal to 0.86 for $P\thickapprox 1.33$, whereas, for $\mu=0.5$, the same amount of common randomness is achieved for $P \thickapprox 2.34$.
\subsection{MIMO Case}
We now focus on the second  scenario depicted in Fig. \ref{correlatedMIMO}, where the communication is over a  MIMO Gaussian channel with the power constraint defined in $\eqref{energyconstraintMIMOCorrelated}$.
\begin{proposition} For the model in Fig. \ref{correlatedMIMO}, the CR capacity $C_{CR}(P,N_{T} \times N_{R})$ is equal:
$$C_{CR}(P,N_{T} \times N_{R})= \underset{ \substack{U \\{\substack{U \circlearrow{X} \circlearrow{Y}\\ I(U;X)-I(U;Y) \leq C(P,N_{T} \times N_{R})}}}}{\max} I(U;X).  $$
\label{proposition2}
\end{proposition}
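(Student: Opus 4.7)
The plan is to show that the MIMO result follows from the same two-step argument as the SISO case (Proposition~\ref{proposition1}), with the only essential change being that the index $i$ carried over the channel is transmitted via a MIMO forward error correcting code whose rate tolerance is governed by $C(P,N_T\times N_R)$ instead of $C(P)$. The source-coding part of the scheme, which determines the achievable CR rate $I(U;X)$, is completely decoupled from the channel's dimensionality, so the bulk of the argument carries over verbatim.

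For the direct part I would fix an arbitrary auxiliary variable $U$ with $U\circlearrow{X}\circlearrow{Y}$ and $I(U;X)-I(U;Y)<C(P,N_T\times N_R)$, and reuse the random codebook construction of Section~\ref{SISOproof}: generate $\mathrm{e}^{n[I(U;X)+\delta]}$ sequences $\bs{u}_{ij}\in\mathcal{U}^n$ of type $P_U$ arranged into $N_1=\mathrm{e}^{n[I(U;X)-I(U;Y)+3\delta]}$ subcodebooks of size $N_2=\mathrm{e}^{n[I(U;Y)-2\delta]}$. Terminal~$A$ sets $K(\bs{x})=\bs{u}_{ij}$ jointly typical with $\bs{x}$ and $f(\bs{x})=i$, and then encodes $i$ via a MIMO channel code $\bs{t}^n\in\mathbb{C}^{N_T\times n}$ satisfying the power constraint \eqref{energyconstraintMIMOCorrelated}. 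Since $N_1$ is below the MIMO capacity threshold, the channel coding theorem for the model \eqref{Model:MIMOcorrelated} guarantees a code of rate $\tfrac{1}{n}\log N_1$ with vanishing error; thus Terminal~$B$ recovers $\tilde i=i$ from $\bs{Z}^n$ with probability at least $1-\epsilon'$ and defines $L=\bs{u}_{\tilde i,j}$ where $\bs{u}_{\tilde i,j}$ is jointly $UY$-typical with $\bs{y}$. The verifications of \eqref{cardinalitycorrelated} and \eqref{ratecorrelated} are identical to the SISO proof because they only involve the source-typicality argument, and the collision bound on $P^n_{XY}(S_2)$ is again $\mathrm{e}^{-n\delta+o(n)}$ with the same choice of $N_1,N_2$, giving $\Pr[K\neq L]\leq\epsilon$.

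For the converse I would mirror the SISO converse presented in the Appendix, where the sole appearance of the channel is through the inequality $I(T^n;Z^n)\leq nC(P)+n\epsilon'$ obtained from the input power constraint and the Gaussian channel capacity formula. Replacing the scalar inputs/outputs with their vector counterparts $(\bs{T}^n,\bs{Z}^n)$ and applying the MIMO capacity expression \eqref{transmissioncapacityMIMOcorrelated} under the constraint \eqref{energyconstraintMIMOCorrelated} yields $I(\bs{T}^n;\bs{Z}^n)\leq nC(P,N_T\times N_R)+n\epsilon'$. Plugging this into the same chain of inequalities used to bound $H(K)/n$ produces the desired single-letter upper bound with the MIMO capacity in place of $C(P)$, and the standard auxiliary-variable identification together with the Markov constraint $U\circlearrow{X}\circlearrow{Y}$ completes the argument.

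The step I expect to require the most care is not the CR-generation analysis, which is essentially an off-the-shelf reuse of Proposition~\ref{proposition1}, but rather the clean invocation of the MIMO channel coding theorem under the per-symbol power constraint $\mathbb{E}[\|\bs{T}_i\|_2^2]\leq P$: one must argue that any rate strictly below $C(P,N_T\times N_R)$ is achievable with vanishing error for a deterministic input sequence produced by the CR encoder, and symmetrically that no code can do better in the converse. Once this channel-side ingredient is in place, the source-side combinatorics (typical sequences, collision counts, cardinality bound) transfer without modification, and the proof of Proposition~\ref{proposition2} reduces to the appropriate substitution of $C(P)\mapsto C(P,N_T\times N_R)$ throughout the SISO argument.
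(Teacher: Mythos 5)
Your proposal is correct, and your converse is essentially identical to the paper's (the appendix converse for the MIMO case is precisely the SISO chain of inequalities with $(\bs{T}^n,\bs{Z}^n)$ in place of $(T^n,Z^n)$ and $C(P,N_T\times N_R)$ bounding $\sum_i I(\bs{T}_i;\bs{Z}_i)$). Your direct part, however, takes a genuinely different route. The paper does not re-run the SISO achievability scheme over the MIMO channel as a single pipe; instead it performs an SVD $\mathbf{H}=\mathbf{U}\mathbf{\Lambda}\mathbf{V}^{\h}$ to decompose the channel into $N_{\min}$ parallel scalar Gaussian sub-channels with waterfilling powers $\tilde{P}_\ell$, invokes Proposition~\ref{proposition1} on each sub-channel to get auxiliary variables $U_\ell$, and then combines them into a single $V=(U_1,\ldots,U_{N_{\min}})$ (pairwise independent, each satisfying $U_\ell\circlearrow{X}\circlearrow{Y}$) so that $I(V;X)=\sum_\ell I(U_\ell;X)$ and the sum constraint becomes $I(V;X)-I(V;Y)\leq\sum_\ell C(\tilde{P}_\ell)=C(P,N_T\times N_R)$; the existence of such a $V$ is proved in a separate appendix lemma. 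Your approach — keep the SISO codebook $\{\bs{u}_{ij}\}$ untouched and transmit the bin index $i$ at rate $\tfrac{1}{n}\log N_1<C(P,N_T\times N_R)$ with a generic MIMO channel code — is arguably cleaner: it needs only the black-box MIMO coding theorem under the per-letter power constraint (which holds, since i.i.d. Gaussian codebooks with waterfilling covariance satisfy $\mathbb{E}[\lVert\bs{T}_i\rVert_2^2]\leq P$ for every $i$), and it sidesteps both the parallel-channel bookkeeping and the auxiliary-variable combination lemma entirely. What the paper's route buys in exchange is the explicit transceiver structure (SVD pre- and post-processing), which is one of the stated contributions on "optimal signal processing"; your route buys a shorter and more modular proof in which the source-coding combinatorics are literally reused and only the scalar capacity is substituted by the MIMO one.
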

\subsubsection{Direct Proof}
\label{signalprocessing}
\begin{figure}
	\centering 
	\tikzstyle{vecArrow} = [thick, decoration={markings,mark=at position
   1 with {\arrow[semithick]{open triangle 60}}},
   double distance=1.4pt, shorten >= 5.5pt,
   preaction = {decorate},
   postaction = {draw,line width=1.4pt, white,shorten >= 4.5pt}]
\tikzstyle{block} = [draw, top color=white, bottom color=white!80!gray, rectangle, rounded corners,
minimum height=2em, minimum width=2cm]
\tikzstyle{blockSVD} = [draw, top color=white, bottom color=white!80!gray, rectangle, rounded corners,
minimum height=3cm, minimum width=.9cm]
\tikzstyle{input} = [coordinate]
\tikzstyle{sum} = [draw, circle,inner sep=1pt, minimum size=2mm, thick]
\usetikzlibrary{arrows}
\scalebox{.9}{
\begin{tikzpicture}[scale= 1,font=\footnotesize]
\node[] (xtilde) {\large$\boldsymbol{\tilde{T}}$};
\node[blockSVD,right=1cm of xtilde] (prep) {\large$\mathbf{V}$};
\node[blockSVD, right=1.5cm of prep](prepp) {\large
$\mathbf{V}^\h$};
\node[sum,right= .8cm of prepp.425] (lambda1) {$\times$};
\node[sum,right= .8cm of prepp.295] (lambda2) {$\times$};
%\node[sum, right=1cm of lambda1] (ntilde1) {$+$};
%\node[sum, right=1cm of lambda2] (ntilde2) {$+$};
\node[blockSVD, right=2cm of prepp] (postp) {\large$\mathbf{U}$};
\node[blockSVD, right=2.8cm of postp] (post) {\large$\mathbf{U}^\h$};
\node[right=1cm of post] (ytilde) {\large$ \boldsymbol{\tilde{Z}}$};
\node[above=.3cm of lambda1] (lamda1) {\large$\lambda_1$};
\node[above=.3cm of lambda2] (lamda2) {\large$\lambda_{N_{\text{min}}}$};
\node[sum,right=1cm of postp] (ksi) {$+$};
\node[above=.3cm of ksi] (ksi1){\large$ \boldsymbol{\tilde{\xi}}$};
%\node[above=.3cm of ntilde1] (ntild1) {\large${\tilde{\xi}_1}$};
%\node[above=.3cm of ntilde2] (ntild2) {\large${\tilde{\xi}_N}$};
\draw[vecArrow] (xtilde) -- (prep) ;
\draw[vecArrow] (prep) --node[above]{\large$\boldsymbol{T}$} (prepp);
\draw[->,thick] (prepp.425)--(lambda1);
\draw[->,thick] (prepp.295)--(lambda2);
\draw[->,thick] (lambda1)--(postp.115);
%\draw[->,thick] (ntilde1) --;
\draw[->,thick] (lambda2)--(postp.245);
%\draw[->,thick] (ntilde2)--;
\draw[vecArrow] (postp) -- (ksi);
\draw[vecArrow] (ksi) --node[above]{\large$\boldsymbol{Z}$} (post);
%node[above]{\large$\boldsymbol{y}$} (post);
\draw[vecArrow] (post) -- (ytilde);
\draw[->,thick] (lamda1) -- (lambda1);
\draw[->,thick] (lamda2) -- (lambda2);
\draw[->,thick] (ksi1) -- (ksi);
%\draw[->,thick] (ntild1) -- (ntilde1);
%\draw[->,thick] (ntild2) -- (ntilde2);
\draw[dashed] (3.3,-2.3) rectangle (9.3,2.3);
\node[below=.5 of prep] (precoder) { pre-processing};
\node[below=.5 of post] (decoder) { post-processing};
\node[below=0.01 cm of lambda1] (inf1) {$ \vdots$};
%\node[below=0.01 cm of ntilde1] (inf2) {$ \vdots$};
\node[] at (6, -2.5) {channel};

\end{tikzpicture}}
	\caption{Decomposition of the MIMO channel into $N_{\text{min}}$ parallel channels through SVD.}
	\label{Fig:SVD_Dec}
\end{figure}
%In this section, we provide optimal signal processing in the sense that with this processing we can prove the achievability of the common randomness capacity over MIMO Gaussian channels.
\begin{proof}
The capacity $C(P,N_T\times N_R)$ can be computed by converting the MIMO channel into parallel, independent and scalar Gaussian sub-channels. This conversion is based on the following singular value decomposition (SVD) of the channel matrix $\mathbf{H}$:
	\begin{equation}
	   \mathbf{H}=\mathbf{U}\mathbf{\Lambda}\mathbf{V}^\h,
	\nonumber \end{equation}
where $\mathbf{U} \in \mathbb{C}^{N_R\times N_R}$ and $\mathbf{V} \in \mathbb{C}^{N_T\times N_T}$ are unitary matrices. $\mathbf{\Lambda} \in \mathbb{C}^{N_R\times N_T} $ is a diagonal matrix, whose diagonal elements $\lambda_1\geq \lambda_2 \geq \cdots \geq \lambda_{N_{\text{min}}}$ are the ordered singular values of the channel matrix $\mathbf{H}$. We denote with $N_{\text{min}}$ the rank of $\mathbf{H}$, $N_{\text{min}} \coloneqq  \min(N_T,N_R)$. If we multiply \eqref{Model:MIMOcorrelated} with the unitary matrix $\mathbf{U}^\h$, we then obtain
\begin{equation}
\underbrace{\mathbf{U}^\h \bs{Z}}_{\coloneqq \tilde{\bs{Z}}}= \mathbf{U}^\h\mathbf{U}\mathbf{\Lambda}\underbrace{\mathbf{V}^\h \bs{T}}_{\coloneqq  \tilde{\bs{T}}}+\underbrace{\mathbf{U}^\h\bs{\xi}}_{\coloneqq \tilde{\bs{\xi}}}.
\nonumber \end{equation}
%Since $\mathbf{U}$ is a unitary matrix, \eqref{Model:MIMO} can be rewritten as follows:
%\begin{equation}
 %    \tilde{\bs{y}}= \mathbf{\Lambda} \tilde{\bs{x}} + \tilde{\bs{\xi}}
%\end{equation}
It can easily be checked that $\tilde{\bs{\xi}}$ has the same distribution as $\bs{\xi}$ \cite{teletar}, i.e.,  $\tilde{\bs{\xi}} \sim \mathcal{N}_{\mathbb{C}}(\bs{0}_{N_R},\sigma^2 \mathbf{I}_{N_R}),$ and we have
\begin{equation*}
\mathbb{E}[\tilde{\bs{T}}^\h\tilde{\bs{T}}]= \mathbb{E}[ \bs{T}^\h \mathbf{V} \mathbf{V}^\h \bs{T}]=  \mathbb{E}[\bs{T}^\h \bs{T}].
\end{equation*}
We obtain the $N_{\text{min}}$ independent scalar Gaussian channels depicted in Fig. \ref{Fig:SVD_Dec}
\begin{align}
    \tilde{Z}_{\ell}=\lambda_{\ell}\tilde{T}_{\ell}+\tilde{\xi}_{\ell} \ \ \ell=1\dots N_{\text{min}}.
\nonumber \end{align}
The SVD can be interpreted as a pre-processing (multiplication
with $\mathbf{V}$) and a post-processing (multiplication with $\mathbf{U}^{H}$).
The optimization problem in \eqref{transmissioncapacityMIMOcorrelated} is reduced to \cite{Tse}
\begin{align}
    &C(P,N_{T}\times N_{R})=\max_{\tilde{P}_{1}\dots\tilde{P}_{N_{\text{min}}}} \sum_{\ell=1}^{N_{\text{min}}} \log\left( 1+\frac{\lambda_{\ell}^{2}}{\sigma^{2}}\tilde{P_{\ell}}   \right), \nonumber \\
   & \text{s.t.} \sum_{\ell=1}^{N_{\text{min}}}\tilde{P_{\ell}} \leq P \  \text{and} \  \tilde{P_{\ell}}\geq 0 \ \  \ell=1 \dots N_{\text{min}}.
    \nonumber \end{align}
    It holds that
    \begin{align}
        C(P,N_{T}\times N_{R})=\sum_{\ell=1}^{N_{\text{min}}} C(\tilde{P}_{\ell}),
    \nonumber \end{align}
    where the capacity of each sub-channel is expressed as
    \begin{align}
       C(\tilde{P}_{\ell})= \log\left( 1+\frac{\lambda_{\ell}^{2}}{\sigma^{2}}\tilde{P_{\ell}}   \right) \ \  \ell=1 \dots N_{\text{min}}.
    \nonumber \end{align}
  The power  $\tilde{P_{\ell}}$ is called the \textit{waterfilling} rule \cite{waterfilling} and it is expressed as follows:
  \begin{equation}
      \tilde{P_{\ell}}=\max\left(0,\kappa-\frac{\sigma^{2}}{\lambda_{\ell}}\right) \ \ell=1 \dots N_{\text{min}},
  \nonumber \end{equation}
 where $\kappa$ is the \textit{waterfilling} level.

 By Proposition \ref{proposition1}, we have for $\ell=1\hdots N_{\text{min}}.$
\begin{align}
C_{CR}(\tilde{P_{\ell}})=\underset{\substack{U_\ell\\{\substack{U_\ell \circlearrow{X} \circlearrow{Y}\\ I(U_\ell;X)-I(U_\ell;Y) \leq C(\tilde{P}_{\ell})}}}}{\max} I(U_\ell;X). 
\nonumber \end{align}
Since we may lose information through processing, it holds that
\begin{align}
  &C_{CR}(P,N_{T}\times N_{R}) \nonumber \\ &\geq \sum_{\ell=1}^{N_{\text{min}}} C_{CR}(\tilde{P}_{\ell}) \nonumber\\
  &=\sum_{\ell=1}^{N_{\text{min}}}\underset{\substack{U_{\ell}\\{\substack{U_{\ell} \circlearrow{X} \circlearrow{Y}\\ I(U_{\ell};X)-I(U_{\ell};Y) \leq C(\tilde{P}_{\ell})}}}}{\max} I(U_{\ell};X) \nonumber \\
  &=\sum_{\ell=1}^{N_{\text{min}}}\underset{\substack{U_{1}\hdots U_{N_{\text{min}}}\\{\substack{U_{\ell} \circlearrow{X} \circlearrow{Y} \ \ell=1\hdots N_{\text{min}}\\ I(U_{\ell};X)-I(U_{\ell};Y) \leq C(\tilde{P}_{\ell}) \ \ell=1\hdots N_{\text{min}} }}}}{\max} I(U_{\ell};X) \nonumber \\
  &\geq \underset{\substack{U_{1}\hdots U_{N_{\text{min}}}\\{\substack{U_{\ell} \circlearrow{X} \circlearrow{Y} \ \ell=1\hdots N_{\text{min}}\\ I(U_{\ell};X)-I(U_{\ell};Y) \leq C(\tilde{P}_{\ell}) \ \ell=1\hdots N_{\text{min}} }}}}{\max} \sum_{\ell=1}^{N_{\text{min}}} I(U_{\ell};X) \nonumber \\
  &\geq \underset{\substack{U_{1}\hdots U_{N_{\text{min}}}\\{\substack{ I(U_{\ell};X)-I(U_{\ell};Y) \leq C(\tilde{P}_{\ell}) \ \ell=1\hdots N_{\text{min}} }} \\  U_{\ell} \ \text{independent of} \ (X,Y) ,\ \ell=1\hdots,N_{\text{min}} \\ U_{\ell},\ \ell=1\hdots,N_{\text{min}},\ \text{pairwise independent} \\ \sum_{\ell=1}^{N_{\text{min}}} I(U_{\ell};X)- \sum_{\ell=1}^{N_{\text{min}}} I(U_{\ell};Y) \leq \sum_{\ell=1}^{N_{\text{min}}} C(\tilde{P}_{\ell})}}{\max} \sum_{\ell=1}^{N_{\text{min}}} I(U_{\ell};X) \nonumber \\
     &\overset{(a)}{=}\underset{\substack{V\\{\substack{V \circlearrow{X} \circlearrow{Y}\\ I(V;X)-I(V;Y) \leq \sum_{\ell=1}^{N_{\text{min}}} C(\tilde{P}_{\ell})}}}}{\max} I(V;X) \nonumber \\
      &=\underset{\substack{V\\{\substack{V \circlearrow{X} \circlearrow{Y}\\ I(V;X)-I(V;Y) \leq C(P,N_{T}\times N_{R})}}}}{\max} I(V;X), 
      \label{existV}
\end{align}
where $(a)$ follows from defining $V$ such that $U_{\ell}\circlearrow{V}\circlearrow{X}\circlearrow{Y}, \ \ell=1\hdots N_{\text{min}}$, with $I(V;X)=\sum_{\ell=1}^{N_{\text{min}}}I(U_{\ell};X)$ and $I(V;Y)=\sum_{\ell=1}^{N_{\text{min}}}I(U_{\ell};Y)$. Here, $U_{\ell}$, $\ell=1\hdots N_{\text{min}}$, satisfy the following constraints:
 \begin{enumerate}
     \item  Each $U_{\ell}, \ \ell=1\hdots N_{\text{min}}$ is independent of $(X,Y).$
     \item $I(U_{\ell};X)-I(U_{\ell};Y) \leq C(\tilde{P}_{\ell}) \quad \text{for} \ \ell=1\hdots N_{\text{min}}.$
     \item $U_{\ell}$, $\ell=1\hdots N_{\text{min}}$, are pairwise independent.
 \end{enumerate}
 The existence of such $V$ is proved in the appendix.
\end{proof}
\subsubsection{Converse Part:}The converse proof for the MIMO case is analogous to the converse proof for the SISO case.
\begin{remark}
The signal processing presented in Section \ref{signalprocessing} is \textit{optimal} in the sense that with this processing, one can demonstrate the achievability of the common randomness capacity over the MIMO Gaussian channel.
\end{remark}
\begin{remark}
Since $C(P,N_{T}\times N_{R})\geq C(P)$, it follows from Proposition \ref{proposition1} and Proposition \ref{proposition2} that:
\begin{align}
C_{CR}(P,N_{T}\times N_{R})\geq C_{CR}(P).
\nonumber \end{align}
Intuitively, because the MIMO channel has a higher capacity than the SISO channel, the amount of information that can be reliably transmitted is greater. Consequently, by communicating over the MIMO channel, Terminals $A$ and $B$ can generate a greater amount of common randomness.
\end{remark}
\section{Application of Common Randomness: Secure Identification}
\label{sec4}
%\subsubsection{Randomized Identification-Code for the Gaussian Wiretap Channel}
In this section, we explore a significant application of CR generation: the identification paradigm. Unlike transmission, it appears that the resource CR can enhance the identification capacity of channels. We introduce a coding scheme for CR-assisted secure identification and prove a lower bound on the secure identification capacity within this setting, as illustrated in Proposition \ref{proposition3}.

\begin{proposition}
Let $C_{SID}^{c}(g,g^{\prime},P)$ and  $C_{S}(g,g^{\prime},P)$, respectively, be the secure identification capacity and the secrecy capacity for the model in Fig. \ref{fig:secureId}, respectively. It holds that

$ \text{if} \  C_{S}(g,g^{\prime},P)>0 \  \text{then}$
$$C_{SID}^{c}(g,g^{\prime},P) \geq \underset{\substack{U\\{\substack{U \circlearrow{X} \circlearrow{Y}\\ I(U;X)-I(U;Y) \leq C(g,P)}}}}{\max} I(U;X).  $$  
\label{proposition3}
\end{proposition}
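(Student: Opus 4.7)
The plan is to combine the CR-generation scheme of Proposition \ref{proposition1}, applied to the legitimate channel $\Wg$, with the standard Ahlswede--Dueck reduction from CR to identification, and then to add a wiretap-secrecy layer that exploits the hypothesis $C_{S}(g,g',P)>0$. Observe that the target lower bound on the right-hand side is exactly $C_{CR}(P)$ for the main Gaussian channel $\Wg$ (as characterized in Proposition \ref{proposition1}), so the task reduces to showing that any CR rate achievable over $\Wg$ is also achievable as a \emph{secure} identification rate once a strictly positive secret-key rate can be piggybacked on the same channel.

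First, I would split the $m$ channel uses into two blocks of lengths $m_{1}=(1-\alpha)m$ and $m_{2}=\alpha m$ for a small $\alpha>0$. On the first block, I apply the achievability scheme of Proposition \ref{proposition1} to the main channel $\Wg$ to generate a permissible pair $(K,L)$ with $\Pr[K\neq L]\leq \epsilon$, $|\mathcal{K}|\leq \mathrm{e}^{cm_{1}}$, and $H(K)/m_{1}\geq H^{\star}-\delta$, where $H^{\star}$ denotes the right-hand side of the proposition. On the second block, I use the strict positivity of the secrecy capacity $C_{S}(g,g',P)>0$ to transmit, via a wiretap code for $(\Wg,\Vg)$, a uniformly distributed secret key $S\in\mathcal{S}$ at some small rate $r_{S}<C_{S}(g,g',P)$, with Bob decoding $\tilde{S}$ satisfying $\Pr[\tilde{S}\neq S]\leq\epsilon'$ and Eve's leakage $I(S;Z'^{m_{2}})$ vanishing.

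Next, I convert $(K,\tilde{S})$ into a secure identification code. For each identification message $i\in\{1,\ldots,N\}$ with $N\leq \mathrm{e}^{\mathrm{e}^{m(H^{\star}-2\delta)}}$, I select at random a family of coloring maps $\{f_{i}(\cdot,s)\}_{s\in\mathcal{S}}$ from $\mathcal{K}$ into a small set, indexed by the secret key $s$, and set $\bs{u}_{i}=\Phi(x^{n})$ according to the CR-generation encoder of Proposition \ref{proposition1}. Bob's decoding set $\mathcal{D}_{i}(y^{n})$ is built by testing, on the basis of $L$ and $\tilde{S}$, whether $f_{i}(L,\tilde{S})$ is consistent with message $i$. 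The standard Ahlswede--Dueck double-exponential packing argument together with $\Pr[K\neq L]+\Pr[\tilde{S}\neq S]\leq \epsilon+\epsilon'$ then forces the errors of the first and second kind in \eqref{firstRequirement}--\eqref{secondRequirement} to be arbitrarily small, giving an identification rate arbitrarily close to $(1-\alpha)H^{\star}$.

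The main obstacle is verifying the wiretap condition \eqref{wiretapCond}: since Eve observes the correlated source $(X^{n},Y^{n})$, she already fully knows $K$, so security cannot come from the CR itself and must be carried entirely by the secret key $S$. I would establish this via a random-coding argument over the coloring family $\{f_{i}(\cdot,s)\}$, combined with a soft-covering/resolvability bound on Eve's wiretap observation of the secret-key block, to show that for any measurable $\sete=\sete(x^{n},y^{n},z'^{m})$ and any $i\neq j$ one has $\Vg^{m}(\sete|\bs{u}_{j})+\Vg^{m}(\sete^{c}|\bs{u}_{i})\geq 1-\lambda$, provided only that $r_{S}>0$. Once this is in place, letting $\delta\downarrow 0$ and $\alpha\downarrow 0$ yields an achievable secure identification rate arbitrarily close to $H^{\star}$, which is exactly the claimed lower bound in Proposition \ref{proposition3}.
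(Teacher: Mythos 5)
Your overall strategy --- generate CR over the main channel at rate $C_{CR}(g,P)$ on a long first block and lift it to secure identification via a wiretap-protected short second block --- is the same as the paper's, which splits the $m=n+\lceil\sqrt{n}\rceil$ channel uses into an $n$-block for CR generation and a $\lceil\sqrt{n}\rceil$-block carrying a wiretap code and then invokes the Transformator-Lemma. However, there is a concrete gap in your construction: the identification message $i$ never enters the transmitted signal. Your first block encodes $X^{n}$ (independent of $i$) and your second block carries a secret key $S$ (also independent of $i$), so the channel input is the same for every message. Consequently, for any choice of decoding sets, $\Wg^{m}(\setd_i(y^n)|\bs{u}_j)=\Wg^{m}(\setd_i(y^n)|\bs{u}_i)\geq 1-\lambda_1$, i.e.\ $\lambda_1+\lambda_2\geq 1$, and \eqref{secondRequirement} cannot be met. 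Bob ``testing whether $f_i(L,\tilde S)$ is consistent with message $i$'' has nothing to be consistent \emph{with}, because no color is ever communicated. In the paper's scheme this is precisely the role of the second block: the color $E_i(K)$ itself --- not a message-independent key --- is encoded with a wiretap code over the $\lceil\sqrt{n}\rceil$ extra channel uses; Bob compares the decoded color with $E_i(L)$, which gives the second-kind error bound via the usual coloring/packing argument, and wiretap-encoding the color is simultaneously what defeats Eve, who (as you correctly note) already knows $K$ from her access to $(X^{n},Y^{n})$.

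Your variant could be repaired in the Ahlswede--Zhang manner by additionally transmitting the key-masked color $f_i(K,S)$ in the clear over a further $o(m)$ channel uses and arguing security from Eve's ignorance of $S$, but that needs extra channel uses and a separate secrecy argument you have not supplied; wiretap-encoding the color directly, as the paper does, is simpler. The remaining ingredients of your write-up --- identifying the right-hand side with $C_{CR}(g,P)$ via Proposition \ref{proposition1}, observing that security must come from the wiretap layer rather than from the CR, and the $\alpha\downarrow 0$ rate accounting --- are correct and consistent with the paper.
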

\begin{remark}
Consider in particular the case when $C(g,P)\geq H(X|Y)$. Then, it follows from Proposition \ref{proposition3} that $C_{SID}^{c}(g,g^{\prime},P) \geq H(X)$ as long as $C_{S}(g,g^{\prime},P)>0$.
\end{remark}
\begin{proof}
%\begin{figure}[hbt!]
%\centering
%    \input{figures/wiretap.tex}
%   \caption{Wiretap channel with correlation}
%    \label{fig:wiretap}
%\end{figure}

 %One can particularly concentrate on secure identification. In this case, for the common randomness generation, it suffices to consider the following encoding scheme.
 
Given a DMMS $P_{XY}$, Alice observes the outputs $X^{n}$ and Bob observes the outputs $Y^{n}.$
 Alice generates a random variable $K$ with alphabet $\mathcal{K}=\{1\dots M^{\prime}\},$ such that $K=\Phi(X^{n}).$ To send a message $i$, we prepare a set of coloring-functions or mappings $E_i$ known by the sender and the receiver.\begin{align*}
 E_i  & \colon \mathcal{K} \longrightarrow \{1,\ldots,M^{\prime\prime} \} \\
 & \colon \underbrace{K}_{\text{coloring}} \mapsto \underbrace{E_i(K).}_{\text{color}}
 \end{align*}
 $X^{n}$ is encoded to a sequence $T^{n}$ using an error correcting code, then 
 %$K$ is encoded to a sequence  $\bs{u^{\prime}}$ of length $n_{0}=n-\sqrt{n}$, using a transmission code $\Cp=\{(\up_K,\dcp_K),K \in \{1,\ldots,\Mp\},\quad \up_j \in \setx',\ \dcp_K \subset \sety' \} $ with rate $R'_c= C(\text{g},P)- \epsilon,\ 0\leq \epsilon \leq C(\text{g},P)$ and a vanishing error probability $ \lambda'_n,\ \lim_{n \to \infty} \lambda'_n=0$. Every codeword $\up_K=(\up_{K,1},\up_{K,2},\ldots,\up_{K,n_{0}})$ fulfills the following energy constraint: \begin{equation}
%\sum_{l=1}^{n} {\up_{j,l}}^2 \leq n \cdot P  \end{equation}.
  $E_i(K)$ is encoded to a sequence $T^{\lceil \sqrt{n} \rceil}$ using a wiretap code as proposed in \cite{Idwiretapchannels}.
  The sequence $T^{m}$, obtained by concatenating $T^{n}$ and $T^{\lceil \sqrt{n} \rceil}$, where $m=n+\lceil \sqrt{n} \rceil $ as depicted in Fig. \ref{fig:codingscheme}, fulfills the power constraint:
 \begin{equation}
    \mathbb{E}[T_{i}^{2}]\leq P  \quad \forall i=1\dots m.
\nonumber \end{equation}
\begin{figure}[hbt!]
    \centering
    \input{figures/ID.tex}
    \caption{Coding scheme.}
    \label{fig:codingscheme}
\end{figure}
 $T^{m}$ is sent over the Wiretap channel.
  
  %$\bs{u^{\prime\prime}}$ of length $\sqrt{n}$ using a  wiretap-code $\Cpp=\{(\upp_k,\dcpp_k),k \in \{1,\ldots,\Mpp\}, \quad \upp_k \in \setx'',\  \dcpp_k\subset \sety'' \}$ is a wiretap code with rate $R''_c= \epsilon,\ 0\leq \epsilon \leq C(\text{g},P)$ and a vanishing error probability $ \lambda''_n,\ \lim_{n \to \infty} \lambda''_n=0$. Every codeword $\upp_k=(\upp_{k,1},\upp_{k,2},\ldots,\upp_{k,n})$ fulfills the following power constraint: \begin{equation}\sum_{l=1}^{n} {\upp_{k,l}}^2 \leq \lceil\sqrt{n} \rceil  \cdot P \end{equation}
 % $T^{n}$, obtained by concatenating $\bs{u^{\prime}}$ and $\bs{u^{\prime\prime}}$, is sent over the Gaussian Wiretap channel, where $Z_{1}^{n}$ is the output of the channel to Bob.
  Bob generates $L=\Psi(Y^{n},Z^{n})$ such that $\text{Pr}[K\neq L]$ is low.
  Since we have used a wiretap code in the second part, Bob, interested in $i^{\prime}$, can identify whether the message of interest was sent or not.
  We choose the rate of the first code to be approximately equal to the capacity of the channel to the legitimate receiver so that Bob can identify the message at a rate approximately equal to $C(g,P),$ the transmission capacity  of the main channel, without paying a price for the identification task.
  Although Eve can decode with low error probability the sequence  $T^{n}$, she cannot, with this setting, identify
the color, i.e., the second fundamental part of the sent codeword, even if she knows the correlation between the sources. Thus, the wiretapper cannot identify  the message  $i$. For more details regarding the proof, we refer the reader to \cite{wafapaper}.
Let us denote the CR capacity for this model by $C_{CR}(g,P).$ Then, by applying the Transformator-Lemma \cite{trafo}\cite{Idfeedback}, it holds that 
\begin{align}
    C_{SID}^{c}(g,g^{\prime},P) \geq C_{CR}(g,P)   \hspace{0.3cm} \text{if} \  C_{S}(g,g^{\prime},P)>0.
\nonumber \end{align}

In addition, it holds by Proposition \ref{proposition1} that
\begin{align}
C_{CR}(g,P)=\underset{\substack{U\\{\substack{U \circlearrow{X} \circlearrow{Y}\\ I(U;X)-I(U;Y) \leq C(g,P)}}}}{\max} I(U;X).
    \nonumber \end{align}
%Furthermore, it holds by the results in \cite{wafapaper} that:
%\begin{align}
%C_{SID}(g,g^{\prime},P)>0 \iff %C_{S}(g,g^{\prime},P)>0
%\end{align}
\end{proof}
 \begin{remark}
 It has recently been proven in \cite{wafapaper} that the secure identification capacity with randomized encoding is equal to the capacity of the channel to the legitimate receiver $C(g,P)$, provided that the secrecy capacity is strictly positive. As long as $P$ is chosen to satisfy $C(g,P)\leq H(X)$, the lower bound in Proposition $\ref{proposition3}$ may exceed the capacity of the main channel. Let us reconsider the example of binary sources presented in Example $\ref{example1}$. It is observed in Fig. $\ref{figplotcomparetolowerbound}$ that we can achieve a performance gain of at least $0.278$ for $P\approx1.72$, $\mu=0.2$, and $\sigma^2=1$.
\begin{figure}[hbt!]
  \centering
  % This file was created by matlab2tikz.
%
%The latest updates can be retrieved from
%  http://www.mathworks.com/matlabcentral/fileexchange/22022-matlab2tikz-matlab2tikz
%where you can also make suggestions and rate matlab2tikz.
%
\definecolor{mycolor1}{rgb}{0.00000,0.44700,0.74100}%
\definecolor{mycolor2}{rgb}{1.00000,0.00000,1.00000}%
\scalebox{.64}{\begin{tikzpicture}

\begin{axis}[%
width=4.521in,
height=3.57in,
at={(0.758in,0.482in)},
scale only axis,
xmin=0,
xmax=3,
xlabel style={font=\color{white!15!black}},
xlabel={\textbf{Power}},
ymin=0,
ymax=1.01,
ylabel style={font=\color{white!15!black}},
ylabel={\textbf{Secure Identification Capacity}},
axis background/.style={fill=white},
xmajorgrids,
ymajorgrids,
legend style={at={(0.364,0.166)}, anchor=south west, legend cell align=left, align=left, draw=white!15!black}
]
\addplot [color=mycolor1]
  table[row sep=crcr]{%
0	0\\
0.01	0.00717764648853503\\
0.02	0.0142845760983855\\
0.03	0.0213221687042469\\
0.04	0.0282917641831838\\
0.05	0.035194663945699\\
0.06	0.0420321323942373\\
0.07	0.0488053983132112\\
0.08	0.055515656194372\\
0.09	0.0621640675011009\\
0.1	0.0687517618749675\\
0.11	0.0752798382876907\\
0.12	0.0817493661414398\\
0.13	0.0881613863202314\\
0.14	0.0945169121950086\\
0.15	0.100816930584825\\
0.16	0.107062402676424\\
0.17	0.11325426490434\\
0.18	0.119393429793558\\
0.19	0.125480786766609\\
0.2	0.131517202916897\\
0.21	0.137503523749935\\
0.22	0.143440573894081\\
0.23	0.149329157782258\\
0.24	0.155170060306075\\
0.25	0.160964047443681\\
0.26	0.166711866862596\\
0.27	0.172414248498721\\
0.28	0.178071905112638\\
0.29	0.183685532824265\\
0.3	0.189255811626865\\
0.31	0.194783405881363\\
0.32	0.200268964791864\\
0.33	0.205713122863233\\
0.34	0.211116500341524\\
0.35	0.216479703638053\\
0.36	0.221803325737807\\
0.37	0.227087946592901\\
0.38	0.232334133501722\\
0.39	0.237542441474391\\
0.4	0.242713413585121\\
0.41	0.247847581312035\\
0.42	0.252945464864979\\
0.43	0.258007573501832\\
0.44	0.263034405833794\\
0.45	0.268026450120105\\
0.46	0.272984184552646\\
0.47	0.27790807753082\\
0.48	0.282798587927113\\
0.49	0.287656165343718\\
0.5	0.292481250360578\\
0.51	0.297274274775177\\
0.52	0.30203566183443\\
0.53	0.306765826458964\\
0.54	0.311465175460088\\
0.55	0.316134107749756\\
0.56	0.320773014543762\\
0.57	0.325382279558451\\
0.58	0.329962279201189\\
0.59	0.334513382754815\\
0.6	0.339035952556319\\
0.61	0.343530344169946\\
0.62	0.34799690655495\\
0.63	0.352435982228176\\
0.64	0.35684790742168\\
0.65	0.361233012235545\\
0.66	0.3655916207861\\
0.67	0.369924051349664\\
0.68	0.374230616502018\\
0.69	0.37851162325373\\
0.7	0.382767373181489\\
0.71	0.386998162555587\\
0.72	0.391204282463687\\
0.73	0.395386018931\\
0.74	0.399543653037002\\
0.75	0.403677461028802\\
0.76	0.407787714431286\\
0.77	0.411874680154136\\
0.78	0.415938620595837\\
0.79	0.419979793744766\\
0.8	0.423998453277475\\
0.81	0.42799484865424\\
0.82	0.431969225211986\\
0.83	0.435921824254659\\
0.84	0.439852883141144\\
0.85	0.443762635370794\\
0.86	0.447651310666653\\
0.87	0.451519135056456\\
0.88	0.455366330951456\\
0.89	0.459193117223174\\
0.9	0.462999709278112\\
0.91	0.466786319130512\\
0.92	0.470553155473216\\
0.93	0.474300423746678\\
0.94	0.478028326206201\\
0.95	0.481737061987443\\
0.96	0.485426827170242\\
0.97	0.489097814840826\\
0.98	0.492750215152442\\
0.99	0.496384215384462\\
1	0.5\\
1.01	0.503597750702102\\
1.02	0.507177646488535\\
1.03	0.510739863705226\\
1.04	0.514284576098385\\
1.05	0.517811954865361\\
1.06	0.521322168704247\\
1.07	0.5248153838623\\
1.08	0.528291764183184\\
1.09	0.531751471153079\\
1.1	0.535194663945699\\
1.11	0.53862149946623\\
1.12	0.542032132394237\\
1.13	0.545426715225557\\
1.14	0.548805398313211\\
1.15	0.552168329907368\\
1.16	0.555515656194372\\
1.17	0.558847521334877\\
1.18	0.562164067501101\\
1.19	0.565465434913224\\
1.2	0.568751761874967\\
1.21	0.572023184808353\\
1.22	0.575279838287691\\
1.23	0.57852185507279\\
1.24	0.58174936614144\\
1.25	0.584962500721156\\
1.26	0.588161386320232\\
1.27	0.591346148758095\\
1.28	0.594516912195009\\
1.29	0.59767379916111\\
1.3	0.600816930584825\\
1.31	0.603946425820666\\
1.32	0.607062402676424\\
1.33	0.610164977439778\\
1.34	0.61325426490434\\
1.35	0.616330378395137\\
1.36	0.619393429793558\\
1.37	0.622443529561767\\
1.38	0.625480786766609\\
1.39	0.628505309103012\\
1.4	0.631517202916897\\
1.41	0.634516573227619\\
1.42	0.637503523749935\\
1.43	0.640478156915528\\
1.44	0.643440573894081\\
1.45	0.646390874613923\\
1.46	0.649329157782258\\
1.47	0.652255520904976\\
1.48	0.655170060306075\\
1.49	0.658072871146678\\
1.5	0.660964047443681\\
1.51	0.663843682088024\\
1.52	0.666711866862596\\
1.53	0.669568692459793\\
1.54	0.672414248498721\\
1.55	0.675248623542066\\
1.56	0.678071905112638\\
1.57	0.680884179709577\\
1.58	0.683685532824265\\
1.59	0.686476048955915\\
1.6	0.689255811626865\\
1.61	0.69202490339758\\
1.62	0.694783405881363\\
1.63	0.697531399758789\\
1.64	0.700268964791864\\
1.65	0.702996179837918\\
1.66	0.705713122863233\\
1.67	0.708419870956415\\
1.68	0.711116500341524\\
1.69	0.71380308639095\\
1.7	0.716479703638053\\
1.71	0.719146425789573\\
1.72	0.721803325737807\\
1.73	0.724450475572564\\
1.74	0.727087946592901\\
1.75	0.729715809318649\\
1.76	0.732334133501722\\
1.77	0.734942988137232\\
1.78	0.737542441474391\\
1.79	0.740132561027231\\
1.8	0.742713413585121\\
1.81	0.745285065223101\\
1.82	0.747847581312034\\
1.83	0.750401026528579\\
1.84	0.752945464864979\\
1.85	0.75548095963869\\
1.86	0.758007573501832\\
1.87	0.760525368450482\\
1.88	0.763034405833794\\
1.89	0.765534746362977\\
1.9	0.768026450120105\\
1.91	0.77050957656678\\
1.92	0.772984184552646\\
1.93	0.775450332323761\\
1.94	0.77790807753082\\
1.95	0.780357477237239\\
1.96	0.782798587927113\\
1.97	0.78523146551302\\
1.98	0.787656165343718\\
1.99	0.79007274221169\\
2	0.792481250360578\\
2.01	0.794881743492489\\
2.02	0.797274274775177\\
2.03	0.799658896849113\\
2.04	0.80203566183443\\
2.05	0.804404621337762\\
2.06	0.806765826458964\\
2.07	0.809119327797727\\
2.08	0.811465175460088\\
2.09	0.813803419064825\\
2.1	0.816134107749756\\
2.11	0.818457290177939\\
2.12	0.820773014543762\\
2.13	0.823081328578947\\
2.14	0.825382279558451\\
2.15	0.827675914306277\\
2.16	0.829962279201189\\
2.17	0.832241420182341\\
2.18	0.834513382754815\\
2.19	0.836778211995072\\
2.2	0.839035952556319\\
2.21	0.841286648673789\\
2.22	0.843530344169946\\
2.23	0.8457670824596\\
2.24	0.84799690655495\\
2.25	0.850219859070546\\
2.26	0.852435982228176\\
2.27	0.854645317861679\\
2.28	0.856847907421679\\
2.29	0.859043791980258\\
2.3	0.861233012235545\\
2.31	0.863415608516247\\
2.32	0.8655916207861\\
2.33	0.867761088648269\\
2.34	0.869924051349664\\
2.35	0.872080547785205\\
2.36	0.874230616502018\\
2.37	0.876374295703567\\
2.38	0.87851162325373\\
2.39	0.88064263668081\\
2.4	0.882767373181488\\
2.41	0.884885869624724\\
2.42	0.886998162555587\\
2.43	0.889104288199044\\
2.44	0.891204282463687\\
2.45	0.893298180945403\\
2.46	0.895386018931\\
2.47	0.897467831401768\\
2.48	0.899543653037002\\
2.49	0.901613518217464\\
2.5	0.903677461028802\\
2.51	0.905735515264918\\
2.52	0.907787714431286\\
2.53	0.909834091748228\\
2.54	0.911874680154136\\
2.55	0.91390951230866\\
2.56	0.915938620595837\\
2.57	0.917962037127187\\
2.58	0.919979793744766\\
2.59	0.921991922024163\\
2.6	0.923998453277475\\
2.61	0.925999418556223\\
2.62	0.92799484865424\\
2.63	0.929984774110513\\
2.64	0.931969225211986\\
2.65	0.933948231996327\\
2.66	0.935921824254659\\
2.67	0.937890031534244\\
2.68	0.939852883141144\\
2.69	0.941810408142836\\
2.7	0.943762635370794\\
2.71	0.945709593423039\\
2.72	0.947651310666653\\
2.73	0.949587815240257\\
2.74	0.951519135056456\\
2.75	0.953445297804259\\
2.76	0.955366330951456\\
2.77	0.95728226174697\\
2.78	0.959193117223174\\
2.79	0.961098924198184\\
2.8	0.962999709278112\\
2.81	0.964895498859299\\
2.82	0.966786319130512\\
2.83	0.968672196075116\\
2.84	0.970553155473216\\
2.85	0.97242922290377\\
2.86	0.974300423746678\\
2.87	0.976166783184843\\
2.88	0.978028326206202\\
2.89	0.979885077605734\\
2.9	0.981737061987443\\
2.91	0.983584303766314\\
2.92	0.985426827170242\\
2.93	0.987264656241941\\
2.94	0.989097814840826\\
2.95	0.99092632664487\\
2.96	0.992750215152442\\
2.97	0.994569503684117\\
2.98	0.996384215384462\\
2.99	0.998194373223811\\
3	1\\
};
\addlegendentry{\textbf{Capacity of secure identification} \\ \textbf{in presence of local randomness}}

\addplot [color=mycolor2, mark=square, mark options={solid, mycolor2}]
  table[row sep=crcr]{%
0	0\\
0.4	0.356715966801272\\
0.443700637652136	0.386983717545609\\
0.487401275304271	0.428427101978334\\
0.815858733228839	0.631816959696426\\
0.865858733228839	0.658549119279064\\
0.915858733228839	0.683883457664767\\
1.25	0.83559452771623\\
1.28947576084681	0.843763551865788\\
1.32895152169362	0.857478655511383\\
1.62047051030039	0.963165181222168\\
1.67047051030039	0.979631387926287\\
1.72047051030039	0.999999999999956\\
2	0.999999999999956\\
2.039600717839	0.999999999999956\\
2.079201435678	0.999999999999956\\
2.29304790375616	0.999999999999956\\
2.34304790375616	0.999999999999956\\
2.39304790375616	0.999999999999956\\
2.6	0.999999999999956\\
2.62530317679635	0.999999999999956\\
2.6506063535927	0.999999999999956\\
2.74211629784053	0.999999999999956\\
2.79211629784053	0.999999999999956\\
2.84211629784053	0.999999999999956\\
2.9	0.999999999999956\\
2.93006653339619	0.999999999999956\\
2.96013306679237	0.999999999999956\\
2.98006653339619	0.999999999999956\\
3	0.999999999999956\\
3.1	0.999999999999956\\
3.2	0.999999999999956\\
3.3	0.999999999999956\\
};
\addlegendentry{\textbf{Lower bound on CR-assisted} \\ \textbf{secure identification capacity}}

\end{axis}
\end{tikzpicture}}%
  \caption{Comparison of the lower bound on CR-assisted secure identification to the capacity of secure identification with randomized encoding for  a noise variance $\sigma^2=1$, for the correlated sources presented in Example \ref{example1} with $\mu=0.2.$}
  \label{figplotcomparetolowerbound}
\end{figure}
%The lower bound in Proposition \ref{proposition3} exceeds $C(g,P)$, the capacity of the main channel, for all $P$ satisfying $ C(g,P)<H(X)$.

Furthermore, if we consider identification over Gaussian channels without the availability of CR as a resource and with deterministic encoding, then, in this case, the identification capacity, measured on the log-log scale as already introduced in Definition $\ref{defcapacity}$, is equal to zero. This implies that the secure identification capacity for this communication scenario is also always equal to zero. This demonstrates, as already noted in Remark $\ref{remarkcorrelation}$, that in contrast to transmission, the resource CR allows for a significant performance gain in the identification task.
%This proves that, in contrast to transmission, correlation allows a significant performance gain in the identification task.
\end{remark}
 \begin{remark}
Clearly, we can proceed analogously to derive a lower bound for CR-assisted secure identification capacity over MIMO GWCs.
\end{remark}

 %We focus now on common randomness generation when we have a communication through single-user MIMO Gaussian Wiretap channels, as depicted in Fig. . The encoding procedure is done as follows:
 %Each codeword $\bs{T}^{n}$ sent over the Gaussian channel is obtained by concatenating two codewords as shown in Fig. The first part of the codeword has length $n_{1}$ and it contains the information about $K$, the random variable generated at $\mathcal{X}$. Both Bob and Eve are able to decode this part with small error probability. Then, Bob will generate the random variable $L$ such that $\text{Pr}[K\neq L]\leq \epsilon$.
 %The second part is some codeword  of a certain wiretap-code. It has length $\sqrt{n_{1}}$, where $n=n_{1}+\sqrt{n_{1}}.$ As result, only Bob is able to decode this part with small error probability.
 %we have generated common randomness between Alice and Bob. Now we can take advantage of the common randomness in the identification task, which is proved with this setting to be secure in \cite{wafapaper}.

 \section{Conclusions} \label{conclusion}
We studied the problem of CR generation over single-user SISO and MIMO Gaussian channels due to their practical relevance in various communication scenarios, such as satellite and deep space communication links, wired and wireless communications, etc. We provided a single-letter characterization of the CR capacity for both scenarios along with rigorous proofs. Additionally, we demonstrated that through CR generation, significant performance gains could be achieved in Post-Shannon communication tasks, which could be advantageous in numerous new applications, including machine-to-machine and human-to-machine systems, as well as the tactile internet. Specifically, we proposed a coding scheme for secure identification over the GWC with CR available as a resource and established a lower bound on the secure identification capacity within this framework. This lower bound may exceed the transmission capacity of the main channel, which is equal to the secure identification capacity in the case of randomized encoding, provided the secrecy capacity is strictly positive. As a future work, we suggest investigating the impact of antenna correlation on the CR capacity of MIMO Gaussian channels. Subsequent research could focus on providing a single-letter characterization of the CR-assisted secure identification capacity of the GWC and exploring CR-assisted identification for continuous-time channels.

\appendix
\section{Appendix}
 
 \subsection{Proof of the Existence of a random variable $V$  as defined in  \eqref{existV} }
 \begin{proof}
 We want to show that a random variable $V$ exists such that
$U_{\ell}\circlearrow{V}\circlearrow{X}\circlearrow{Y},\ \ell=1\hdots N_{\text{min}}$ with $I(V;X)=\sum_{\ell=1}^{N_{\text{min}}}I(U_{\ell};X)$ and $I(V;Y)=\sum_{\ell=1}^{N_{\text{min}}}I(U_{\ell};Y)$, where $U_{\ell}, \ell=1\hdots N_{\text{min}},$ satisfy the following constraints:
 \begin{enumerate}
     \item Each $U_{\ell}, \ \ell=1\hdots N_{\text{min}},$ is independent of $(X,Y).$
     \item $I(U_{\ell};X)-I(U_{\ell};Y) \leq C(\tilde{P}_{\ell}) \quad \ell=1\hdots N_{\text{min}}$
     \item $U_{\ell}$, $\ell=1\hdots N_{\text{min}}$, are pairwise independent
 \end{enumerate}
 It suffices to consider $V=U_{1}\hdots U_{N_{\text{min}}}$. Then it holds that
 \begin{align}
     I(V;X)&=I(U_{1}\hdots U_{N_{\text{min}}};X) \nonumber \\
           &=\sum_{\ell=1}^{N_{\text{min}}} I(U_{\ell};X|U_{1}\hdots U_{\ell-1}) \nonumber \\
           &\overset{(a)}{=}\sum_{\ell=1}^{N_{\text{min}}} I(U_{\ell};X),
 \nonumber \end{align}
 where $(a)$ follows because $U_{\ell},$ $\ell=1 \ldots N_{\text{min}},$ are pairwise independent and because each $U_{\ell},$ $\ell=1 \ldots N_{\text{min}},$ is independent of $X.$  Analogously, it holds that  $ I(V;Y)=\sum_{\ell=1}^{N_{\text{min}}}I(U_{\ell};Y)$. \\~\\
 The Markov chain $U_\ell \circlearrow{V}\circlearrow{X}\circlearrow{Y},$ $\ell=1\hdots N_{\text{min}},$ is satisfied since  for $\ell=1\hdots N_{\text{min}},$ we have
\begin{align}
    &\mbb P[Y=y|X=x,V=v,U_\ell=u_\ell] \nonumber \\
    &\overset{(a)}{=}\mbb P[Y=y|X=x,V=v] \nonumber \\
    &=\mbb P[Y=y|X=x,U=u_1,U_2=u_2,\ldots ,{U_N}_{\text{min}}={u_N}_{\text{min}}] \nonumber \\
   &\overset{(b)}{=} \mbb P[Y=y|X=x], 
\nonumber \end{align}
where $(a)$ follows because  $V=U_{1}\hdots U_{N_{\text{min}}}$ and where
$(b)$ follows because each $U_{\ell}$ is independent of $(X,Y).$ 
 \end{proof}
\subsection{Proof of the Non-Convexity of $g_{0}\left(\boldsymbol{\theta}\right)$}
\label{proofconcavity}
It holds that
\begin{align}
    g_{0}\left(\bs{\theta}\right) &=-I(U;X) \nonumber \\ &=-H(X)-H(U)+H(XU),
\nonumber \end{align}
which yields 
\begin{equation}
    \frac{\partial g_{0}\left(\bs{\theta}\right)}{\partial P_{UX}\left(u,x\right)}=\frac{\partial H(XU)}{\partial P_{UX}\left(u,x\right)}-\frac{\partial H(U)}{\partial P_{UX}\left(u,x\right)}.
\nonumber \end{equation}

On one side, we have
\begin{align}
    &H(U)\nonumber \\&=-\underset{u'\in\mathcal{U}}{\sum}P_{U}\left(u'\right)\log\left(P_{U}\left(u'\right)\right) \nonumber \\
    &=-\underset{u'\in\mathcal{U}}{\sum} \left(\underset{x' \in \mathcal{X}}{\sum} P_{UX}\left(u',x'\right)              \right) \log\left(\underset{x' \in \mathcal{X}}{\sum} P_{UX}\left(u',x'\right)              \right)
\nonumber \end{align}
yielding
\begin{equation}
    \frac{\partial H(U)}{\partial P_{UX}\left(u,x\right)}=-\left[\log\left(\underset{x'\in \mathcal{X}}{\sum}P_{UX}\left(u,x'\right)\right)+\frac{1}{\ln(2)}\right].
\nonumber \end{equation}

On the other side, we have
\begin{equation}
    H(UX)=-\underset{u'\in\mathcal{U},x'\in\mathcal{X}}{\sum}  P_{UX}\left(u',x'\right)\log\left( P_{UX}\left(u',x'\right)              \right)
\nonumber \end{equation}
yielding
\begin{equation}
    \frac{\partial H(UX)}{\partial P_{UX}\left(u,x\right)}=-\left[\log\left(P_{UX}\left(u,x\right)\right)+\frac{1}{\ln(2)}\right].
\nonumber \end{equation}
    Thus, we obtain
    \begin{equation}
         \frac{\partial g_{0}\left(\bs{\theta}\right)}{\partial P_{UX}\left(u,x\right)}=\log\left(\frac{\underset{x' \in \mathcal{X}}{\sum} P_{UX}\left(u,x'\right) }{P_{UX}\left(u,x\right)}
         \right).
         \label{derivg0}
    \end{equation}
    As a result,
    \begin{equation}
      \frac{\partial^{2}g_{0}\left(\bs{\theta}\right)}{\partial^{2} P_{UX}\left(u,x\right)} =\frac{P_{UX}\left(u,x\right)-\underset{x' \in \mathcal{X}}{\sum} P_{UX}\left(u,x'\right) }{P_{UX}\left(u,x\right) \underset{x' \in \mathcal{X}}{\sum} P_{UX}\left(u,x'\right)}\leq 0
    \nonumber \end{equation}
    where 
    \begin{equation}
        \underset{u' \in \mathcal{U},x' \in \mathcal{X}}{\sum} \frac{\partial^{2}g_{0}\left(\bs{\theta}\right)}{\partial^{2} P_{UX}\left(u',x'\right)} <0. \nonumber
    \end{equation}
    This implies that the Hessian matrix of $g_{0}$ is not positive semi-definite, which proves the non-convexity of $g_{0}$ in $\bs{\theta}$.
\subsection{Computation of $\frac{\partial g_{1}\left(\boldsymbol{\theta}\right)}{\partial P_{UX}\left(u,x\right)}$ for $U\circlearrow{X}\circlearrow{Y}$ }
It holds that
\begin{align}
    g_{1}\left(\boldsymbol{\theta}\right)&=I(U;X|Y) -\min\{C(P),H(X|Y)\}
    \nonumber \\
    &=H(X|Y)+H(U|Y)-H(UX|Y) -\min\{C(P),H(X|Y)\}.
\nonumber \end{align}
Thus, we have
\begin{equation}
    \frac{\partial g_{1}\left(\boldsymbol{\theta}\right)}{\partial P_{UX}\left(u,x\right)}=\frac{\partial H(U|Y)}{\partial P_{UX}\left(u,x\right)}-\frac{\partial H(UX|Y)}{\partial P_{UX}\left(u,x\right)}.
\nonumber \end{equation}
The Markov chain $U\circlearrow{X}\circlearrow{Y}$ implies that \\~\\ $\forall x \in \mathcal{X}$, $\forall u \in \mathcal{U}$ and $\forall y \in \mathcal{Y}$:
\begin{equation}
   P_{UX|Y}\left(u,x|y\right)=\frac{P_{Y|X}\left(y|x\right)P_{UX}\left(u,x\right)}{P_{Y}\left(y\right)}.
   \label{prop1}
\end{equation}
yielding
\begin{equation}
    \frac{\partial P_{UX|Y}\left(u,x|y\right)}{\partial P_{UX}\left(u,x\right)}=\frac{P_{Y|X}\left(y|x\right)}{P_{Y}\left(y\right)}
    \label{propderiv1}
\end{equation}
and that $\forall u \in \mathcal{U}$ and $\forall y \in \mathcal{Y}$
\begin{equation}
   P_{U|Y}\left(u|y\right)=\underset{x' \in \mathcal{X}}{\sum}\frac{P_{Y|X}\left(y|x'\right)P_{UX}\left(u,x'\right)}{P_{Y}\left(y\right)}, 
   \label{prop2}
\end{equation}
yielding
\begin{equation}
    \frac{\partial P_{U|Y}\left(u|y\right)}{\partial P_{UX}\left(u,x\right)}=\frac{P_{Y|X}\left(y|x\right)}{P_{Y}\left(y\right)}.
    \label{propderiv2}
\end{equation}
Furthermore, we have
\begin{align}
    &H(UX|Y) \nonumber \\
    &=-\underset{y\in\mathcal{Y}}{\sum}P_{Y}\left(y\right) \underset{u'\in\mathcal{U},x'\in\mathcal{X}}{\sum}P_{UX|Y}\left(u',x'|y\right)\log\left(P_{UX|Y}\left(u',x'|y\right)\right),
\nonumber \end{align}
which yields
\begin{align}
    &\frac{\partial H(UX|Y)}{\partial P_{UX}\left(u,x\right)}\nonumber \\
    &\overset{(a)}{=}-\underset{y \in \mathcal{Y}}{\sum}P_{Y|X}\left(y|x\right)\left[\log\left(P_{UX|Y}\left(u,x|y\right)\right)+\frac{1}{\ln(2)}\right],
\nonumber \end{align}
where $(a)$ follows from using the sum and product rule of derivatives and from $\eqref{propderiv1}$.
Similarly, it holds that
\begin{align}
    H(U|Y)=
    -\underset{y\in\mathcal{Y}}{\sum}P_{Y}\left(y\right)\underset{u'\in\mathcal{U}}{\sum}P_{U|Y}\left(u'|y\right)\log\left(P_{U|Y}\left(u'|y\right)\right).
    %&-\underset{y \in \mathcal{Y}}{\sum}P_{Y}\left(y\right)\underset{u \in \mathcal{U}}{\sum} \left(\underset{x \in \mathcal{X}}{\sum} \frac{P_{UX}\left(u,x\right)}{P_{X}\left(x\right)} P_{X|Y}\left(x|y\right)            \right) \log\left(\underset{x \in \mathcal{X}}{\sum} \frac{P_{UX}\left(u,x\right)}{P_{X}\left(x\right)} P_{X|Y}\left(x|y\right)            \right) 
\nonumber \end{align}
Thus, we obtain
\begin{align}
    &\frac{\partial H(U|Y)}{\partial P_{UX}\left(u,x\right)} \nonumber \\
    &\overset{(a)}{= }-\underset{y \in \mathcal{Y}}{\sum}P_{Y|X}\left(y|x\right)\left[\log\left(P_{U|Y}\left(u|y\right)\right)+\frac{1}{\ln(2)}\right],
\nonumber \end{align}
where $(a)$ follows from using the sum and product rule of derivatives and from $\eqref{propderiv2}$.

As a result, we have
\begin{align}
    &\frac{\partial g_{1}\left(\boldsymbol{\theta}\right)}{\partial P_{UX}\left(u,x\right)}\nonumber \\ &=\underset{y \in \mathcal{Y}}{\sum} P_{Y|X}\left(y|x\right) \log \left( \frac{P_{UX|Y}\left(u,x|y\right)}{P_{U|Y}\left(u|y\right)}\right) \nonumber \\
    &\overset{(a)}{= }\underset{y \in \mathcal{Y}}{\sum} P_{Y|X}\left(y|x\right) \log \left( \frac{P_{Y|X}\left(y|x\right)P_{UX}\left(u,x\right)}{\underset{x' \in \mathcal{X}}{\sum}P_{Y|X}\left(y|x'\right)P_{UX}\left(u,x'\right)}\right),
    \label{derivg1estimate}
\end{align}
where $(a)$ follows from $\eqref{prop1}$ and $\eqref{prop2}$.

\section*{Acknowledgments}
The authors acknowledge the financial support by the Federal Ministry of Education and Research of Germany in the program of “Souverän. Digital. Vernetzt.”. Joint project 6G-life, project identification number: 16KISK002.{\color{black}{ Holger Boche and Christian Deppe further gratefully acknowledge
the financial support by the BMBF Quantum Programm QD-CamNetz, Grant
16KISQ077, QuaPhySI, Grant 16KIS1598K, and QUIET, Grant 16KISQ093.}} Christian Deppe was supported  by the Bundesministerium 
f\"ur Bildung und Forschung (BMBF) through Grant 16KIS1005. Rami Ezzine and Wafa Labidi were supported by the BMBF through Grant 16KIS1003K.

% Can use something like this to put references on a page
% by themselves when using endfloat and the captionsoff option.
% \ifCLASSOPTIONcaptionsoff
%   \newpage
% \fi

% trigger a \newpage just before the given reference
% number - used to balance the columns on the last page
% adjust value as needed - may need to be readjusted if
% the document is modified later
%\IEEEtriggeratref{8}
% The "triggered" command can be changed if desired:
%\IEEEtriggercmd{\enlargethispage{-5in}}

% references section

% can use a bibliography generated by BibTeX as a .bbl file
% BibTeX documentation can be easily obtained at:
% http://mirror.ctan.org/biblio/bibtex/contrib/doc/
% The IEEEtran BibTeX style support page is at:
% http://www.michaelshell.org/tex/bibtex/
%\bibliographystyle{IEEEtran}
% argument is your BibTeX string definitions and bibliography database(s)
%\bibliography{IEEEabrv,../bib/paper}
%
% <OR> manually copy in the resultant .bbl file
% set second argument of \begin to the number of references
% (used to reserve space for the reference number labels box)
\bibliographystyle{IEEEtran}
\bibliography{IEEEabrv,confs-jrnls,refrences}

\end{document}